\documentclass[12pt]{article}

\usepackage{amssymb, caption}
\usepackage{amsfonts}
\usepackage{graphicx}
\usepackage{amsmath}
\usepackage{endnotes}
\usepackage{epsfig}
\usepackage{rotating}
\usepackage{amsthm}

\usepackage{etex}
\newcommand{\ci}{\perp\!\!\!\perp}
\usepackage{auto-pst-pdf} 
\usepackage{pstricks}
\usepackage{pst-node, verbatim}
\usepackage{amsfonts, lscape}
\usepackage{amsmath}
\usepackage{amsmath}
\usepackage{amsfonts}
\usepackage{amssymb}
\usepackage{graphicx}
\usepackage{pstricks, pstricks-add, pgfplots, tikz}
\usepackage{pst-node, verbatim, textcomp}
\usepackage[miktex]{gnuplottex}

\usetikzlibrary{patterns}
\usepgfplotslibrary{external}
\pgfplotsset{compat = newest} 
\usepgfplotslibrary{groupplots}
\usetikzlibrary{matrix,calc}
\pgfplotsset{compat = newest}
\usepgfplotslibrary{fillbetween}
\usepgfplotslibrary{groupplots}
\usetikzlibrary{pgfplots.groupplots}
\usetikzlibrary{plotmarks}
\usetikzlibrary{patterns}
\usepgfplotslibrary{external}
\pgfplotsset{compat = newest} 
\input{texdraw}
\usepgfplotslibrary{external}
\usepgfplotslibrary{groupplots}

\newcounter{remark_counter}
\newtheorem{remark}[remark_counter]{Remark}

\renewcommand{\Pr}{\mathbb{P}}

\newcommand{\mR}{\mathcal{R}}
\newcommand{\mZ}{\mathcal{Z}}

\renewcommand{\Pr}{\mathbb{P}}
\newcommand{\Ep}{{\mathrm{E}}}

\newcommand{\sgn}{\text{sgn}}

\renewcommand{\mid}{\,|\,}

\usepackage{tikz}
\usepackage{pgfplots}
\usepgfplotslibrary{groupplots}
\usetikzlibrary{pgfplots.groupplots}
\usetikzlibrary{plotmarks}
\usetikzlibrary{patterns}
\usepgfplotslibrary{external}
\pgfplotsset{compat=newest} 
\newrgbcolor{lightblue}{0.9 0.9 1}
\newrgbcolor{lightred}{1 0.9 0.9}
\newrgbcolor{lightgreen}{0.9 1 0.9}
\pgfplotsset{every axis legend/.append style={
		at={(0,0)},
		anchor=north east}}
\usepackage{pgfplotstable}

\setcounter{MaxMatrixCols}{10}
\usepackage{amsthm}

\newcommand{\mW}{\mathcal{W}}
\newcommand{\mY}{\mathcal{Y}}

\makeatletter
\def\thmheadbrackets#1#2#3{%
  \thmname{#1}\thmnumber{\@ifnotempty{#1}{ }\@upn{#2}}%
  \thmnote{ {\the\thm@notefont[#3]}}}
\makeatother

\newtheoremstyle{brakets}
  {}
  {}
  {\itshape}
  {}
  {\bfseries}
  {.}
  { }
  {\thmheadbrackets{#1}{#2}{#3}}

\theoremstyle{brakets}
\newtheorem{theorem}{Theorem}
\newtheorem{lemma}{Lemma}

\newtheorem{algorithm}[theorem]{Algorithm}

\newtheorem{step}{Step}

\newtheorem{assumption}{Assumption}

\renewcommand{\baselinestretch}{1.3}

\oddsidemargin 0pt \evensidemargin 0pt \marginparwidth 1in
\marginparsep 0pt \topmargin 0pt \headheight 0pt \headsep 0pt
\textheight 8.75in \textwidth 6.3in \topskip 0pt \footskip 1cm
\linespread{1.45}

\begin{document}

\title{A simple distributional difference-in-differences estimator for univariate and bivariate outcomes\thanks{An earlier version of this manuscript was circulated under the title "Distribution Regression Difference in Differences".}}
\author{Iv\'an Fern\'andez-Val\thanks{Boston University} \and Aico van Vuuren\thanks{University of Groningen.} \and  Francis Vella\thanks{Georgetown University.}
\and  Jonas Meier}
\date{\today}

\maketitle

\begin{abstract}
We provide a simple distribution regression estimator for treatment effects in the  difference-in-differences (DiD) design. Our procedure is particularly useful when the treatment effect differs across the distribution of the outcome variable. Our proposed estimator easily incorporates covariates and, importantly, can be extended to settings where the treatment potentially affects the joint distribution of multiple outcomes. Our key identifying restriction is that the untreated outcome distribution does not exhibit an interaction effect of group and time. This assumption results in a parallel trend assumption on a transformation of the distribution.  We highlight the relationship between our procedure and assumptions with the changes-in-changes approach of Athey and Imbens (2006). We also reexamine the Card and Krueger (1994) study of the impact of minimum wages on employment to illustrate the utility of our approach.   

\end{abstract}

\textbf{Keywords:} difference in differences, multiple outcomes, distributional and quantile treatment effects  
 
\newpage
\section{Introduction}
The remarkable popularity of the difference-in-difference (DiD) estimator to evaluate the impact of policy interventions, inspired by Snow (1885) and introduced to economic applications by David Card (see, for example, Card 1990, Card and Krueger, 1994), is one of the most striking features of empirical work on treatment and policy effects. While the methodological innovations in this literature (see Arkhangelsky and Imbens, 2024 for a recent review) include the use of constructed control groups, the staggered timing of treatments, and fuzzy rather than sharp designs, the vast majority of the associated empirical work has estimated the mean effect of the treatment on a single economic outcome. This seems somewhat limited and a fuller evaluation of a policy treatment would be based on an examination of the marginal and joint distributions of all outcomes it potentially influences. This paper provides a simple procedure for estimating distributional treatment effects in the presence of a single treatment when the outcomes of interest are potentially multivariate. 

An initial methodological innovation focusing on distributional effects in DiD estimation is the changes-in-changes procedure of Athey and Imbens (2006), which estimates the counterfactual distribution of the treated group in the absence of treatment and compares it to the observed distribution under treatment. Torous et al. (2024) extend the Athey and Imbens approach (2006) to the multivariate outcome setting. 
Other work has adapted DiD estimation to examine the treatment effects at different quantiles of the outcome via the use of quantile regression. This includes, for example, Callaway and Li (2018, 2019). In contrast, Dube (2019), Goodman-Bacon (2021), and Goodman-Bacon and Schmidt (2020) employ conventional DiD estimation to explore the impact of the treatment at different points of the outcome distribution. Other distributional approaches include Kim and Wooldridge (2023) and Biewen, Fitzenberger, and Rümmele (2022). The former proposes an inverse probability weighting based procedure, while the latter employs a distribution regression (DR) approach. In this paper we also adopt a DR approach to constructing counterfactuals. In contrast to Biewen, Fitzenberger, and Rümmele (2022), who construct the counterfactual distributions via linear probability models, we employ non-linear link functions such as probit or logit models. This has a number of advantages, which we discuss below. In addition, we provide the associated identifying conditions required for this form of implementation of DR-DiD.

While DiD has typically been employed to evaluate the treatment effect on a specified economic outcome, there are many instances in which the treatment may affect multiple outcomes.  For example, a change in tax rates on earnings of married couples may affect the hours of work of both husbands and wives. An analysis of such a tax change should include the impact on each of the outcomes. However, a richer analysis would not only examine the impact on the respective marginal hours distributions of husbands and wives but also the joint distribution of hours. Alternatively, while evaluations of minimum wage laws typically evaluate their impact on total employment, they may also affect the joint distribution of part-time and full-time employment. We illustrate how this joint effect can be evaluated via the bivariate distribution regression (BDR) approach of Chernozhukov et al. (2025). This requires that we first estimate the joint distribution of the bivariate potential outcome by BDR and then construct the appropriate counterfactual. The treatment effects are then obtained via the appropriate comparisons. A recent alternative to this approach is extending the changes-in-changes procedure to multiple outcomes as is done in Torous et al. (2024). 

We investigate the asymptotic properties of our estimators with panel data under cross-section independence, allowing for unrestricted time series  dependence. This is an extension of the asymptotic theory of counterfactuals of Chernozhukov et al. (2013) for the univariate case and  Chernozhukov et al. (2025) for the bivariate case. 

The following section introduces the model and provides an analysis of the univariate case without covariates. We also extend our analysis to include covariates and contrast our approach with the Athey and Imbens (2006) changes-in-changes procedure.  Section \ref{sec:multiple} extends our analysis to the multiple outcome case and Section \ref{sec:estimation} discusses estimation. Section \ref{sec:asymptotic} discusses the asymptotic theory regarding our estimators. Section \ref{sec:empirical} provides an empirical illustration via an application of our procedure to the data employed in the Card and Krueger (1994) study of the impact of increasing the minimum wage on employment. Section \ref{sec:conclusions} concludes.

\section{Econometric analysis of the univariate case}

\subsection{Model without covariates}

\label{ss:without_covariates}

Consider the canonical DiD design with 2 periods, $T \in \{0,1\}$, and 2 groups, $G \in \{0,1\}$ in which a binary treatment, $D \in \{0,1\}$, is administered only to the treatment group with $G = 1$ in the second period $T=1$. Let $Y^0$ and $Y^1$ denote the potential outcomes under the non-treated and treated statuses. The observed outcome is $Y = Y^0 (1-D) + Y^1 D$, which corresponds to $Y^0$ for both groups at $T=0$, $Y^0$ for $G=0$ at $T=1$, and $Y^1$  for $G=1$ at $T=1$. Note that this implicitly imposes a non-anticipation assumption as we do not distinguish between the outcomes of the treated and non-treated state for $G=1$ in period $T=0$.
 
We are interested in the distributions of the potential outcomes of the treated at $T=1$, that is
$F_{Y^1 \mid G, T}(y \mid 1,1)$ and $F_{Y^0 \mid G, T}(y \mid 1,1)$. For identification of these distributions, we make the following common supports assumption. 
\begin{assumption}[Common support]\label{ass:common_support} There exists a $c>0$ such that
\[
    c < \mathbb{P} (G_i = g, T_i = t) < 1-c,
\]
for $g \in \{0,1\}$ and $t \in \{0,1\}$.
\end{assumption}

Assumption \ref{ass:common_support} implies that there are both treated and control group observations in both periods. Based on this assumption,
$F_{Y^1 \mid G, T}(y \mid 1,1)$ is identified from the observed outcome for  $G=1$ at $T=1$,
$$
F_{Y^1 \mid G, T}(y \mid 1,1) = F_{Y \mid G, T}(y \mid 1,1);
$$
whereas $F_{Y^0 \mid G, T}(y \mid 1,1)$ is not identified without further assumptions. 

The  distribution of $Y^0$ conditional on $G$ and $T$ can be written as:
\begin{equation}\label{eq:dr}
F_{Y^0 \mid G, T}(y \mid g,t) = \Lambda(\alpha(y) + \beta(y) t + \gamma(y)g + \delta(y)gt), \quad y \in \mathbb{R},
\end{equation}
where $\Lambda$ is an invertible CDF on its support such as the logistic, normal or uniform, and $y \mapsto $ $(\alpha(y), $ $ \beta(y), \gamma(y), \delta(y))$ is a vector of function-valued parameters.

The representation in \eqref{eq:dr} does not make any parametric assumption about the underlying distribution of $Y^0 \mid G, T$ since the dummy variable representation within the parentheses on the right-hand side is fully saturated. The parameters of the representation are local as they vary with $y$.  To understand why \eqref{eq:dr} does not impose any restriction, note that $\alpha(y)$, $\beta(y)$, $\gamma(y)$ and $\delta(y)$ can be defined as:\footnote{See also Wooldridge (2023) equations (2.6) and (2.7).}
\[
\begin{split}
	\alpha(y) & = \Lambda^{-1} \left( F_{Y^0 \mid G, T}(y \mid 0,0) \right) \\
	\beta(y) & = \Lambda^{-1} \left( F_{Y^0 \mid G, T}(y \mid 0,1) \right) -  \Lambda^{-1} \left( F_{Y^0 \mid G, T}(y \mid 0,0) \right)   \\
	\gamma(y) & = \Lambda^{-1} \left( F_{Y^0 \mid G, T}(y \mid 1,0) \right) -  \Lambda^{-1} \left( F_{Y^0 \mid G, T}(y \mid 0,0) \right)   \\
	\delta(y) & = \Lambda^{-1} \left( F_{Y^0 \mid G, T}(y \mid 1,1) \right) -  \Lambda^{-1} \left( F_{Y^0 \mid G, T}(y \mid 1,0) \right)  \\
	& - \left[ \Lambda^{-1} \left( F_{Y^0 \mid G, T}(y \mid 0,1) \right) -  \Lambda^{-1} \left( F_{Y^0 \mid G, T}(y \mid 0,0) \right) \right].  \\
\end{split}
\]

When the link function $\Lambda$ has unbounded support, some of the previous parameters might not be well-defined for extreme values of $y$. For example, $\beta(y)$ is undetermined when $F_{Y^0 \mid G, T}(y \mid 0,0) = F_{Y^0 \mid G, T}(y \mid 0,1) =1$. The following assumption guarantees that all the parameters are well-defined on the support of $F_{Y^0 \mid G, T}(y \mid 1,1)$. 
Let $\mathcal{Y}^d_{gt}$  denote the support of $Y_d \mid G=g,T=t$, for $d, g, t \in \{0,1\}$. 
\begin{assumption}[Support Regularity]\label{ass:support} 
\[
    \mathcal{Y}^0_{11} \subseteq (\mathcal{Y}^0_{01} \cap \mathcal{Y}^0_{10} \cap \mathcal{Y}^0_{00}).
\]
\end{assumption}
Assumption \ref{ass:support} is a mild condition, which is easy to assess in practice. It holds trivially in the typical case where the support of $Y^0$ is the same across groups and over time. It does not hold, for example, when $Y^0$ is a censored variable and the smallest lower censoring point or highest upper censoring point occurs for treated group in the second period. Note that this case does not rule out applications where the treatment is an increase in the lower censoring point such as the analysis of an increase in the minimum wage on wages, because the assumption is on the support of the potential outcome without the treatment.

The following assumption is the key for identification:
\begin{assumption}[No-interaction]\label{ass:no_interaction} 
$$\delta(y)=0 \text{ for all } y \in \mathcal{Y}^0_{11} \text{ in \eqref{eq:dr}}.$$
\end{assumption}
Assumption \ref{ass:no_interaction} implies that the distribution of the potential outcome $Y^0$ should not change differently in the second period for the treatment group compared to the control group. That is, we allow a difference between the distributions of the potential outcome $Y^0$ between the treatment and control group, but this difference should be identical in both periods. This is a parallel trend type assumption on a transformation of the distribution and can be written as:
\[
\begin{split}
\Lambda^{-1}\left(F_{Y^0 \mid G, T}(y \mid 1,1)\right)  & - \Lambda^{-1}\left(F_{Y^0 \mid G, T}(y \mid 1,0)\right)  = \\
& \Lambda^{-1}\left(F_{Y^0 \mid G, T}(y \mid 0,1)\right)  - \Lambda^{-1}\left(F_{Y^0 \mid G, T}(y \mid 0,0) \right).
\end{split}
\]

This assumption depends on the link function $\Lambda$ and imposes restrictions on the distribution $F_{Y^0 \mid G, T}$ for some choices of $\Lambda$. For example, if $\Lambda$ is the identity link used in the linear probability model as in, for example, Almond et al. (2011), Dube (2019), Cengiz et al. (2019), Goodman-Bacon and Smith (2020), Goodman-Bacon (2021) and Biewen et al. (2022), one needs strong requirements in order to satisfy the parallel trends assumption (Blundell et al., 2004 and Wooldridge, 2023) That is, we need restrictions on the tails of the distribution of $F_{Y^0 \mid G, T}(y \mid 1,0)$, $F_{Y^0 \mid G, T}(y \mid 0,1)$ and $F_{Y^0 \mid G, T}(y \mid 0,0)$ to guarantee that $F_{Y^0 \mid G, T}(y \mid 1,1)$ is between $0$ and $1$. Thus, it requires that $F_{Y^0 \mid G, T}(y \mid 1,0) \leq 1 + F_{Y^0 \mid G, T}(y \mid 0,0) - F_{Y^0 \mid G, T}(y \mid 0,1)$, which might be restrictive at the top of the distribution, and $F_{Y^0 \mid G, T}(y \mid 1,0) $$\geq   F_{Y^0 \mid G, T}(y \mid 0,0) - F_{Y^0 \mid G, T}(y \mid 0,1)$, which might be restrictive at the bottom of the distribution.\footnote {For example, an increase in 0.2 in probability over time might be realistic for the control group when the initial probability was 0.5. However, if treatment group has a probability of, for example, 0.9, in the first period then it is not possible for the common trends assumption to hold.} Link functions such as the normal or logistic CDFs do not require such restrictions since the transformation expands the range of the distribution to the entire real line. 
Another requirement is that $y \mapsto \Lambda^{-1}\left(F_{Y^0 \mid G, T}(y \mid 1,0)\right) + \Lambda^{-1}\left(F_{Y^0 \mid G, T}(y \mid 0,1)\right)  - \Lambda^{-1}\left(F_{Y^0 \mid G, T}(y \mid 0,0)\right) $  be non-decreasing. These requirements could be used to develop a specification test for $\Lambda$. We provide an example of this test in the context of the empirical example in Section \ref{sec:empirical}.\footnote{Roth and Sant'Anna (2023) proposed a test for the sharp hypothesis that $y \mapsto F_{Y^0 \mid G, T}(y \mid 1,0) + F_{Y^0 \mid G, T}(y \mid 0,1) - F_{Y^0 \mid G, T}(y \mid 0,0)$ be weakly increasing, which can be used in our setting as an specification test for the identity link. We do not pursue this route as we do not encourage the use of the linear probability model.} Moreover, as in standard DiD analysis, it is possible to examine whether the ``parallel trends'' assumption holds pre-treatment when we have multiple observations in the pre-treatment period.  



Assumptions \ref{ass:common_support}--\ref{ass:no_interaction}  identify $F_{Y^0 \mid G, T}(y \mid 1,1)$ since, for $y \in \mathcal{Y}^0_{11}$,
\begin{multline}\label{eq:id}
F_{Y^0 \mid G, T}(y \mid 1,1) = \Lambda(\alpha(y) + \beta(y)  + \gamma(y) ) \\ 
=  \Lambda\left[ \Lambda^{-1}\left(F_{Y^0 \mid G, T}(y \mid 1,0)\right) + \Lambda^{-1}\left(F_{Y^0 \mid G, T}(y \mid 0,1)\right)  - \Lambda^{-1}\left(F_{Y^0 \mid G, T}(y \mid 0,0)\right) \right] \\
= \Lambda\left[ \Lambda^{-1}\left(F_{Y \mid G, T}(y \mid 1,0)\right) + \Lambda^{-1}\left(F_{Y \mid G, T}(y \mid 0,1)\right)  - \Lambda^{-1}\left(F_{Y \mid G, T}(y \mid 0,0)\right) \right],
\end{multline}
under Assumption \ref{ass:no_interaction}. The conditional distribution functions in the final equality are well defined due to the common support assumption stated in Assumption \ref{ass:common_support}. The support restrictions in Assumption \ref{ass:support} ensure that the term inside the squared brackets in \eqref{eq:id} is determined.\footnote{Note that Assumption \ref{ass:support} can be weakened to $\mathcal{Y}^0_{11} \subseteq \mathcal{Y}^0_{00}$ by working with the extended real line. In this case, for example, $\Lambda\left[ \Lambda^{-1}\left(F_{Y \mid G, T}(y \mid 1,0)\right) + \Lambda^{-1}\left(F_{Y \mid G, T}(y \mid 0,1)\right)  - \Lambda^{-1}\left(F_{Y \mid G, T}(y \mid 0,0)\right) \right] = 1$ when $F_{Y \mid G, T}(y \mid 1,0) =1$ or $F_{Y \mid G, T}(y \mid 0,1) =1$ by using the convention $\Lambda^{-1}(1) = + \infty$ and $\Lambda(+\infty) = 1$.} That is, all conditional distribution functions in the final equality are neither zero or one and hence the inverse link functions are strictly larger than minus infinity and strictly smaller than infinity. 

We present this identification result in the following lemma:
\begin{lemma}[Identification with Single Outcome]\label{lemma:did} $y \mapsto F_{Y^0 \mid G,T}(y \mid 1,1)$ is identified on $y \in \mathbb{R}$ under Assumptions \ref{ass:common_support}--\ref{ass:no_interaction}.
\end{lemma}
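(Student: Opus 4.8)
The plan is to treat Lemma~\ref{lemma:did} as essentially a restatement of the chain of equalities in \eqref{eq:id}: the work is to assemble that chain from the maintained assumptions and then to check that every term in it is meaningful for all $y \in \mathbb{R}$. First I would exploit the structure of the DiD design: because $Y = Y_0(1-D) + Y_1 D$ and the treatment indicator satisfies $D = 0$ in the three cells $(g,t) \in \{(0,0),(1,0),(0,1)\}$ -- this is the point at which the non-anticipation convention enters, so that $Y = Y_0$ for $G=1$ at $T=0$ -- we obtain
\[
F_{Y_0 \mid G,T}(y \mid g,t) = F_{Y \mid G,T}(y \mid g,t), \qquad (g,t) \in \{(0,0),(1,0),(0,1)\},
\]
and each right-hand side is a functional of the distribution of the observable vector $(Y,G,T)$, hence identified.

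Next I would evaluate the saturated representation \eqref{eq:dr} at $(g,t) = (1,1)$ and impose Assumption~\ref{ass:no_interaction}, which gives $F_{Y_0 \mid G,T}(y \mid 1,1) = \Lambda(\alpha(y) + \beta(y) + \gamma(y))$. Substituting the explicit expressions for $\alpha(y)$, $\beta(y)$, $\gamma(y)$ displayed just before the assumptions, the copies of $\Lambda^{-1}(F_{Y_0 \mid G,T}(y \mid 0,0))$ contributed by $\alpha$, $\beta$ and $\gamma$ collapse to a single one carrying a minus sign, leaving
\[
F_{Y_0 \mid G,T}(y \mid 1,1) = \Lambda\left[ \Lambda^{-1}\left(F_{Y_0 \mid G,T}(y \mid 1,0)\right) + \Lambda^{-1}\left(F_{Y_0 \mid G,T}(y \mid 0,1)\right) - \Lambda^{-1}\left(F_{Y_0 \mid G,T}(y \mid 0,0)\right) \right].
\]
Replacing the three potential-outcome CDFs on the right by the observed CDFs from the previous step yields exactly \eqref{eq:id}, which expresses $y \mapsto F_{Y_0 \mid G,T}(y \mid 1,1)$ as a known function of the law of $(Y,G,T)$; this is the identification claim, provided the right-hand side is well defined at every $y$.

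That proviso is the only delicate step, and it is where Assumption~\ref{ass:support} is needed, since $\Lambda^{-1}$ sends $0$ to $-\infty$ and $1$ to $+\infty$ and one must preclude the bracketed term from collapsing to an indeterminate $\infty - \infty$. I would dispose of it by a case analysis organized by the position of $y$ relative to the four supports $\mathcal{Y}_0(G=g;T=t)$, equivalently by which of the CDFs $F_{Y_0 \mid G,T}(y \mid g,t)$ equal $0$, equal $1$, or lie in $(0,1)$: when $F_{Y_0 \mid G,T}(y \mid 1,0)$, $F_{Y_0 \mid G,T}(y \mid 0,1)$ and $F_{Y_0 \mid G,T}(y \mid 0,0)$ are all strictly interior, the bracket is finite and \eqref{eq:id} is immediate from the saturated representation with $\delta(y) = 0$; in the remaining, boundary, cases one invokes the inclusion of supports in Assumption~\ref{ass:support}, together with monotonicity of the CDFs and the conventions $\Lambda(-\infty) = 0$, $\Lambda(+\infty) = 1$, to verify that the bracketed term is a well-defined element of the extended line and that applying $\Lambda$ to it returns $F_{Y_0 \mid G,T}(y \mid 1,1)$. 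I expect this boundary bookkeeping -- confirming that Assumption~\ref{ass:support} rules out each of the indeterminate configurations -- to be the main, and essentially the only, obstacle; everything preceding it is the routine substitution already carried out in \eqref{eq:id}.
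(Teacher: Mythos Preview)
Your proposal is correct and follows essentially the same route as the paper: both arguments rest entirely on the chain of equalities \eqref{eq:id}, and the paper's own proof is in fact the one-line statement ``The results follow from equation \eqref{eq:id}.'' You are more explicit than the paper in spelling out why the three observed-cell CDFs coincide with those of $Y_0$ and in flagging the role of Assumption~\ref{ass:support} in keeping the bracketed expression determinate, but the underlying argument is identical.
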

\begin{proof}[Proof of Lemma \ref{lemma:did}] The result follows from equation \eqref{eq:id} for $y \in \mathcal{Y}^0_{11}$. For $y \in \mathbb{R}\setminus \mathcal{Y}^0_{11}$, the result follows by extending $F_{Y^0 \mid G,T}(y \mid 1,1)$ to $0$ and $1$, that is $F_{Y^0 \mid G,T}(y \mid 1,1) = 0$ for $y \leq \inf \mathcal{Y}^0_{11}$ and $F_{Y^0 \mid G,T}(y \mid 1,1) = 1$ for $y \geq \sup \mathcal{Y}^0_{11}$.
\end{proof}
In empirical analysis, researchers typically would like to investigate objects that are related to the distributions of the potential outcome variables. One such object is the distributional treatment effect, defined as:
\[
    \tau(y) := F_{Y^1 \mid G,T}(y \mid 1,1)(y) - F_{Y^0 \mid G,T}(y \mid 1,1), \quad y \in \mathbb{R}.
\]
The distributional treatment effect measures the change in the probability that the outcome is below $y$ as a result of the treatment. Another interesting object is the quantile treatment effect, defined as:
\[
     \tau^*_q := F^{\gets}_{Y^1 \mid G,T}(q \mid 1,1)(y) - F^{\gets}_{Y^0 \mid G,T}(q \mid 1,1), \quad q \in (0,1),
\]
where $F^{\gets}(q) := \inf\{y \in \mathbb{R} : F(y) \geq q\}$ is the quantile (left-inverse) operator of $y \mapsto F(y)$ on $\mathbb{R}$. 
The quantile treatment effect measures the difference in the q-th quantile of the outcome variable as a result of the treatment. 

\subsection{Inclusion of Covariates} 

Including covariates is appealing as the assumption that $\delta(y) = 0$ may be harder to defend when there are differences in the trend between covariates related to the outcome and/or the composition of the treatment group changes over time in terms of observed characteristics; see also Melly and Santangelo (2015).   Covariates are easily incorporated into the identification result by conditioning on them and adding an overlapping support assumption. Specifically, let $X$ be a vector of covariates. The  distribution of $Y^0$ conditional on $G$, $T$ and $X$ can be written:
\begin{equation}\label{eq:dr_with_cov}
F_{Y^0 \mid G, T, X}(y \mid g,t,x) = \Lambda(\alpha(y,x) + \beta(y,x) t + \gamma(y,x)g + \delta(y,x)gt), \quad y \in \mathbb{R},
\end{equation}
where  $(y,x) \mapsto (\alpha(y,x), \beta(y,x), \gamma(y,x), \delta(y,x))$ is a vector of unspecified functions.

Let $\mathcal{Y}^d_{gtx}$ and  $\mathcal{X}_{gt}$ denote the supports of $Y_d \mid G=g,T=t, X=x$ and $X \mid  G=g,T=t$, respectively. The identifying assumptions with covariates become:
\begin{assumption}[Common support]\label{ass:common_support_with_cov} There exists a $c>0$ such that
\[
    c < \Pr (G_i = g, T_i = t \mid X_i=x) < 1-c,
\]
for $g \in \{0,1\}$, $t \in \{0,1\}$ and  $x \in \mathcal{X}_{11}$.
\end{assumption}

\begin{assumption}[Support Regularity with Covariates]\label{ass:support_with_cov} 
\[
    \mathcal{Y}^0_{11x} \subseteq (\mathcal{Y}^0_{01x} \cap \mathcal{Y}^0_{10x} \cap \mathcal{Y}^0_{00x}),  \quad x \in \mathcal{X}_{11}.
\]
\end{assumption}

\begin{assumption}[No-interaction with Covariates]\label{ass:no_interaction_with_cov} 
$$\delta(y,x)=0, \quad y \in \mathcal{Y}^0_{11x} \text{ and } x \in \mathcal{X}_{11} \text{ in \eqref{eq:dr_with_cov}.}$$
\end{assumption}

Assumptions \ref{ass:common_support_with_cov}--\ref{ass:no_interaction_with_cov} identify $F_{Y^0 \mid G, T, X}(y \mid 1,1,x)$ since, for $y \in \mathcal{Y}^0_{11}$ and $x \in \mathcal{X}_{11}$,
\begin{multline}\label{eq:id_with_cov}
F_{Y^0 \mid G, T,X}(y \mid 1,1,x) = \Lambda(\alpha(y,x) + \beta(y,x)  + \gamma(y,x) ) \\ 
=  \Lambda\left[ \Lambda^{-1}\left(F_{Y^0 \mid G, T,X}(y \mid 1,0,x)\right) + \Lambda^{-1}\left(F_{Y^0 \mid G, T,X}(y \mid 0,1,x)\right)  - \Lambda^{-1}\left(F_{Y^0 \mid G, T,X}(y \mid 0,0,x)\right) \right] \\
= \Lambda\left[ \Lambda^{-1}\left(F_{Y \mid G, T,X}(y \mid 1,0,x)\right) + \Lambda^{-1}\left(F_{Y \mid G, T,X}(y \mid 0,1,x)\right)  - \Lambda^{-1}\left(F_{Y \mid G, T,X}(y \mid 0,0,x)\right) \right],
\end{multline}
under the Assumption \ref{ass:no_interaction_with_cov}.  The conditional distribution functions after the final equality are well defined due to the common support assumption stated in Assumption \ref{ass:common_support_with_cov}. The support restrictions in Assumption \ref{ass:support_with_cov} ensure that the term between parentheses in \eqref{eq:id_with_cov} is determined. 

The following Lemma states that  $F_{Y^0 \mid G,T,X}$ is identified under the previous assumptions.
\begin{lemma}[Identification with Covariates]\label{lemma:did_with_cov} Under Assumptions  \ref{ass:support_with_cov} and \ref{ass:no_interaction_with_cov}, $(y,x) \mapsto $ \\ $F_{Y^0 \mid G,T,X}(y \mid 1,1,x)$ is identified on $(y,x) \in \mathbb{R}\times \mathcal{X}_{11}$.
\end{lemma}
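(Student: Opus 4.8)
The plan is to replay the proof of Lemma \ref{lemma:did} with the covariate value $x$ carried along throughout, and then to verify that every object in the resulting expression is a functional of the distribution of the observables $(Y,G,T,X)$. Fix $(y,x)\in\mathbb{R}\times\mathcal{X}_{11}$. First I would observe that, since the dummy-variable representation inside the parentheses of \eqref{eq:dr_with_cov} is fully saturated, the covariate analogues of the four formulas displayed just before Assumption \ref{ass:no_interaction} hold verbatim, with $(\alpha(y),\beta(y),\gamma(y),\delta(y))$ replaced by $(\alpha(y,x),\beta(y,x),\gamma(y,x),\delta(y,x))$ and each $F_{Y_0\mid G,T}(\cdot\mid g,t)$ replaced by $F_{Y_0\mid G,T,X}(\cdot\mid g,t,x)$; in particular $F_{Y_0\mid G,T,X}(y\mid 1,1,x)=\Lambda\left(\alpha(y,x)+\beta(y,x)+\gamma(y,x)+\delta(y,x)\right)$. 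Invoking Assumption \ref{ass:no_interaction_with_cov} to set $\delta(y,x)=0$ (which, by the ``almost surely'' in that assumption, holds for $P_X$-almost every $x$) then yields the first two lines of \eqref{eq:id_with_cov}.

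The second step is to pass from potential to observed outcomes. By the DiD observational structure and the non-anticipation convention, $Y=Y_0$ in the three cells $(G,T)\in\{(0,0),(0,1),(1,0)\}$, so $F_{Y_0\mid G,T,X}(\cdot\mid g,t,x)=F_{Y\mid G,T,X}(\cdot\mid g,t,x)$ for each of those cells. For this substitution to be meaningful at the particular $x$ at hand, I would also record the covariate-overlap requirement $\mathcal{X}_{11}\subseteq\mathcal{X}_{00}\cap\mathcal{X}_{01}\cap\mathcal{X}_{10}$, where $\mathcal{X}_{gt}$ denotes the support of $X$ given $G=g,T=t$; this is the ``overlapping support assumption'' alluded to in the text and is implicit in Assumption \ref{ass:support_with_cov}. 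Carrying out the substitution gives the last line of \eqref{eq:id_with_cov}, whose right-hand side depends only on the conditional distribution of $(Y,G,T,X)$ and is therefore identified.

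The only delicate point — and the main, though mild, obstacle — is the well-definedness of the composition $\Lambda\circ\Lambda^{-1}$ in the bracketed expression, since $\Lambda^{-1}(0)=-\infty$ and $\Lambda^{-1}(1)=+\infty$. When $y$ lies outside the support of $Y_0\mid G=1,T=1,X=x$, the target $F_{Y_0\mid G,T,X}(y\mid 1,1,x)$ equals $0$ or $1$ and is trivially identified, so it suffices to treat $y$ with $F_{Y_0\mid G,T,X}(y\mid 1,1,x)\in(0,1)$. For such $y$, I would argue, exactly as indicated below \eqref{eq:id_with_cov}, that Assumption \ref{ass:support_with_cov} rules out the configurations of the three conditioning supports that would make the bracket an indeterminate $\infty-\infty$ or drive it to $\pm\infty$ inconsistently with an interior target, so the displayed chain of equalities returns the correct value; the observation that $\Lambda$ maps $\pm\infty$ to $\{0,1\}$ shows this argument is only sufficient, not necessary. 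Since $(y,x)$ was arbitrary in $\mathbb{R}\times\mathcal{X}_{11}$, the map $(y,x)\mapsto F_{Y_0\mid G,T,X}(y\mid 1,1,x)$ is identified on that set, which is the claim.
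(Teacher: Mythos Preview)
Your proposal is correct and follows essentially the same approach as the paper, which simply states that the result follows from equation~\eqref{eq:id_with_cov} together with Assumption~\ref{ass:support_with_cov} ensuring the bracketed term is well-defined. Your write-up is in fact more detailed than the paper's one-line proof, spelling out the passage from potential to observed outcomes in the three untreated cells, the implicit covariate-overlap requirement, and the handling of the $\Lambda^{-1}(0)$ and $\Lambda^{-1}(1)$ boundary cases.
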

\begin{proof}[Proof of Lemma \ref{lemma:did_with_cov}] The result follows from equation \eqref{eq:id_with_cov} for  $y \in \mathcal{Y}^0_{11}$. For $y \in \mathbb{R} \setminus \mathcal{Y}^0_{11}$, the result follows by extending $F_{Y^0 \mid G,T,X}(y \mid 1,1,x)$ to $0$ and $1$ as in the proof of Lemma \ref{lemma:did}.
\end{proof}
We can then identify the marginal distribution of $Y^0$ for the treated group in the second period as:
\begin{equation}\label{eq:id_with_cov2}
F_{Y^0 \mid G, T}(y \mid 1,1) = \int_{\mathcal{X}_{11}} F_{Y^0 \mid G, T,X}(y \mid 1,1,x) \mathrm{d}F_{X \mid G,T}(x \mid 1,1),    
\end{equation}
where $F_{X \mid G,T}$ is the distribution of $X$ conditional on $G$ and $T$.



\subsection{Comparison with Changes-In-Changes}
\label{ss:cic}

As our proposal provides an alternative approach to the changes-in-changes (CiC) procedure of Athey and Imbens (2006), it is useful to contrast their set up and assumptions with ours. CiC  assumes that the outcome of an individual without treatment satisfies the relationship $Y_0 = h(U,T)$ for the treatment and control groups, where $U$ is an unobserved and uniformly distributed random variable. It also assumes that $h$ is strictly increasing in the first term and that the distribution of $U$ is independent of time given the treatment outcome, i.e. $U \ci T \mid G$. Finally, the support of $U$ for the treated population should be a subset of those of the untreated population. The final assumption implies in terms of the support of the potential outcomes that: 
\[
\mY_{10}^0 \subseteq \mY_{00}^0, \qquad \mY_{11}^0 \subseteq \mY_{01}^0
\]
Their second support restriction is less restrictive than Assumption \ref{ass:support},  but we do not need their first support restriction.

Under the previous assumptions, the quantile function of $F_{Y^0 \mid G, T}(y \mid 1,1)$ is identified by:
\[
\begin{split}
F^{-1}_{Y^0 \mid G, T}(u \mid 1,1) = &\phi\left( F^{-1}_{Y^0 \mid G, T}(u \mid 1,0) \right), \\ 
& \phi(y) := F^{-1}_{Y^0 \mid G, T}\left( F_{Y^0 \mid G, T}(y \mid 0,0) \mid 0,1\right), \quad u \in [0,1],
\end{split}
\]
where it is assumed that $Y^0$ is continuous with strictly increasing distribution function. 
This identification result is based on the statistical implication of the model that the transformation $\phi$, which ``transports'' values of $Y_0$ from $T=0$ and $T=1$, is the same for the treated and and control groups.\footnote{Kim and Wooldridge (2024) show that it is possible to arrive at the same implication using weaker assumptions.} 

 In terms of the distribution of $Y^0$ in \eqref{eq:dr}, this implication imposes the restriction:
\begin{equation}\label{eq:cic-rest}
    \gamma(y) = \gamma(\phi(y)) + \delta(\phi(y)),
\end{equation}
and $\alpha(y)  = \alpha(\phi(y)) + \beta(\phi(y))$ follows from the definition of $\phi$. 
To see this, note that: 
\begin{equation}\label{eq:cic1}
    F_{Y^0 \mid G,T}(y \mid g,0) = F_{Y^0 \mid G,T}(h(h^{-1}(y,0),1) \mid g,1).
\end{equation}
Evaluating \eqref{eq:cic1} at $g=0$ and applying $F^{-1}_{Y^0 \mid G,T}(\cdot \mid 0,1)$ to both sides:
$$
h(h^{-1}(y,0),1) = F^{-1}_{Y^0 \mid G, T}\left( F_{Y^0 \mid G, T}(y \mid 0,0) \mid 0,1\right) =: \phi(y).
$$
Replacing $\phi(y)$ back in \eqref{eq:cic1} and using the representation \eqref{eq:dr}:
$$
\Lambda(\alpha(y) + \gamma(y)g) = \Lambda(\alpha(\phi(y)) +  \beta(\phi(y)) + \gamma(\phi(y))g + \delta(\phi(y))g ).
$$
The restriction then follows from equalizing the coefficients of $g$ in both sides. Note that \eqref{eq:cic-rest} is not nested with the no-interaction restriction $\delta(y)=0$ of Assumption \ref{ass:no_interaction}.

\subsection{Comparison with Roth and Sant'Anna (2023)} 

Roth and Sant'Anna (2023) derive the condition:
\begin{equation}
F_{Y^0 \mid G, T}(y \mid 1,1)  -  F_{Y^0 \mid G, T}(y \mid 1,0)  = F_{Y^0 \mid G, T}(y \mid 0,1)  - F_{Y^0 \mid G, T}(y \mid 0,0), \quad y \in \mathbb{R},
\label{eq:roth_santanna}
\end{equation}
for the parallel trends assumption in expectations:
$$
\Ep(Y^0 \mid G = 1, T = 1) - \Ep(Y^0 \mid G = 1, T = 0) = \Ep(Y^0 \mid G = 0, T = 1) - \Ep(Y^0 \mid G = 0, T = 0),
$$
to be invariant to strictly monotone transformations of $Y^0$. This condition is different from our no-interaction assumption. Indeed, our DR model with no-interaction does not generally satisfy the parallel trends assumption in expectation as:
\[
\begin{split}
\Ep(Y^0 \mid G = g, T = 1) & - \Ep(Y^0 \mid G = g, T = 0) = \\ 
& \int_{-\infty}^{\infty} [ \Lambda(\alpha(y)  + \gamma(y)g ) -  \Lambda(\alpha(y) + \beta(y) + \gamma(y)g )] \mathrm{d} y
\end{split}
\]
depends on $g$ unless $\Lambda$ is the identity map, or $\beta(y)=0$ (no trend) or $\gamma(y)=0$ (random assignment) for  $y \in \mathbb{R}$. Roth and Sant'Anna (2023) show that their condition holds if there are no trends, random assignment or a mixture of both.\footnote{See Kim and Wooldridge (2024) for a complete characterization of conditions under which parallel trends assumption in expectations hold.}


\subsection{Invariance to Strictly Monotonic Transformations}
\label{rmk:invariance} The DR model in \eqref{eq:dr} with no-interaction is invariant to strictly monotonic transformations in the sense that we specify here. If $Y^0$ follows the DR model and satisfies the no-interaction assumption, then $\tilde Y^0 = h(Y^0)$ also follows the DR model and satisfies the no-interaction assumption for any strictly monotonic transformation $h$. Common examples of these transformations include the logarithm and exponential implying that if the no-interaction assumption holds in levels also holds in logarithms and vice versa.

To see this property note that if $h$ is strictly increasing:
$$
F_{\tilde Y^0 \mid G, T, X}(\tilde y \mid g,t,x) =  \Lambda(\alpha(h^{-1}(\tilde y)) + \beta(h^{-1}(\tilde y)) t + \gamma(h^{-1}(\tilde y))g ) = \Lambda(\tilde \alpha(\tilde y) + \tilde \beta(\tilde y) t + \tilde \gamma(\tilde y)g ),
$$
where $\tilde y \mapsto h^{-1}(\tilde y)$ is the inverse function of $y \mapsto h(y)$, $\tilde \alpha = \alpha \circ h^{-1}$, $\tilde \beta = \beta \circ h^{-1}$ and $\tilde \gamma = \gamma \circ h^{-1}$.  A similar argument applies when $h$ is strictly decreasing. Unlike the parallel trends in expectation, the no-interaction or parallel trends in distribution is invariant to strictly monotonic transformations.\footnote{The distributional approach of Kim and Wooldridge (2024) also satisfies this property.}

\section{Multiple Outcomes} 

\label{sec:multiple}
Some settings may feature multiple outcomes that are potentially affected by the treatment. In these situations, we might be interested not only in how each of the outcomes is affected by the treatment, but also in how the relationship between the outcomes is affected by the treatment. For this, it is necessary to identify the joint distribution of the potential outcomes with and without treatment. We now consider a setting with two outcomes $Y$ and $Z$ and we focus on comparing features of the joint distribution of the potential outcomes with the treatment, $Y^1$ and $Z^1$, and the joint distribution of the potential outcomes without the treatment, $Y^0$ and $Z^0$, for the treated group $G=1$ in the post-treatment period $T=1$. For the sake of illustration we consider two measures of dependence. Namely, Spearman's and Kendall's rank correlation. 

Let $F_{Y^d,Z^d \mid G,T}$ be the joint distribution of $Y^d$ and $Z^d$ conditional on $G$ and $T$, and $F_{Y^d \mid G,T}$ and $F_{Z^d \mid G,T}$ be the corresponding marginals. Spearman's rank correlation between $Y^d$ and $Z^d$, $d \in \{0,1\}$,  can be expressed:
\begin{multline*}
\rho[Y^d,Z^d \mid G=1,T=1]  = \text{Corr}[F_{Y^d \mid G,T}(Y_{d} \mid 1,1), F_{Z^d \mid G,T}(Z^{d} \mid 1,1) \mid G=1,T=1] =  \\  12 \int_{-\infty}^{\infty}\int_{-\infty}^{\infty} [F_{Y^d \mid G,T}(y \mid 1,1) -1/2][F_{Z^d \mid G,T}(z \mid 1,1) - 1/2] F_{Y^d,Z^d \mid G,T}(\mathrm{d}y, \mathrm{d}z \mid 1,1); 
\end{multline*}
and Kendall's rank correlation between $Y_d$ and $Z_d$, $d \in \{0,1\}$,  can be expressed:
\begin{multline*}
\tau[Y^d,Z^d \mid G=1,T=1]  =    4 \int_{-\infty}^{\infty}\int_{-\infty}^{\infty} [F_{Y^d,Z^d \mid G,T}(y,z \mid 1,1) - 1/4] F_{Y^d,Z^d \mid G,T}(\mathrm{d}y, \mathrm{d}z \mid 1,1), 
\end{multline*}
where we assume that $Y_d$ and $Z_d$ are continuous random variables to obtain the expressions on the right hand side.

As in the univariate case, $F_{Y^1,Z^1 \mid G,T}(y,z \mid 1,1)$ is identified by the joint distribution of the observed outcomes, $F_{Y,Z \mid G,T}(y,z \mid 1,1)$, whereas $F_{Y^0,Z^0 \mid G,T}(y,z \mid 1,1)$ is not identified from the data.  To analyze identification, we use a variation of the local Gaussian representation (LGR) of a bivariate distribution from Chernozhukov, Fernand\'ez-Val and Luo (2018). Let $\Phi$ denote the Gaussian distribution function and $\Phi_2(\cdot,\cdot;\rho)$ denote the distribution of the bivariate standard normal with correlation parameter $\rho$. Moreover, $\Lambda$ is, again, a strictly increasing cumulative distribution function. As we show in Section \ref{sec:estimation}, there is a benefit of using the logistic link function in our univariate analysis. Accordingly, we employ this in our empirical analysis for estimating both the univariate and bivariate effects. 

\begin{lemma}[LGR with non-Normal Marginals]\label{lemma:lgr} The joint distribution of two random variables $Y$ and $Z$ conditional on $X$ can be represented by:
$$
F_{Y,Z \mid X}(y,z \mid x)(y,z \mid x) \equiv \Phi_2(\Phi^{-1}(\Lambda(\mu_{Y \mid X}(y \mid x))), \Phi^{-1}(\Lambda(\mu_{Z \mid X}(y \mid x))); \rho_{Y,Z \mid X}(y,z \mid x)),
$$
for all $y,z,x$, where $\mu_{Y \mid X}(y \mid x) = \Lambda^{-1}(F_{Y\mid X}(y \mid x))$, $\mu_{Z \mid X}(y \mid x) = \Lambda^{-1}(F_{Z\mid X}(z \mid x))$, and $\rho_{Y,Z \mid X}(y,z \mid x))$ is the unique solution in $\rho$ to the equation:
$$
F_{Y,Z \mid X}(y,z \mid x)(y,z \mid x) = \Phi_2(\Phi^{-1}(F_{Y\mid X}(y \mid x)), \Phi^{-1}(F_{Z\mid X}(z \mid x)); \rho).
$$
\end{lemma}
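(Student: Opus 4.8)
The plan is to fix an arbitrary point $(y,z,x)$ and reduce the claim to a one-dimensional existence-and-uniqueness statement about the map $\rho \mapsto \Phi_2(a,b;\rho)$, exactly as in the local Gaussian representation of Chernozhukov, Fern\'andez-Val and Luo (2018); the only new element here is the link $\Lambda$, which I expect to cancel. Write $u := F_{Y\mid X}(y\mid x)$, $v := F_{Z\mid X}(z\mid x)$ and $w := F_{Y,Z\mid X}(y,z\mid x)$, so $\mu_{Y\mid X}(y\mid x)=\Lambda^{-1}(u)$ and $\mu_{Z\mid X}(z\mid x)=\Lambda^{-1}(v)$. Since $\Lambda$ is a strictly increasing CDF with range $(0,1)$, for $u,v\in(0,1)$ we have $\Lambda(\Lambda^{-1}(u))=u$ and $\Lambda(\Lambda^{-1}(v))=v$, so $\Phi^{-1}(\Lambda(\mu_{Y\mid X}(y\mid x)))=\Phi^{-1}(u)$ and likewise for $Z$; the $\Lambda$-wrapping in the statement is therefore cosmetic, and the two displayed equations coincide. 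Hence it suffices to show that $w=\Phi_2(\Phi^{-1}(u),\Phi^{-1}(v);\rho)$ has a unique solution $\rho\in[-1,1]$.

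First I would dispose of the degenerate cases. If $u\in\{0,1\}$ or $v\in\{0,1\}$ then $w$ is pinned down by the marginals (e.g.\ $w=0$ when $u=0$ or $v=0$, $w=v$ when $u=1$, $w=u$ when $v=1$), while $\Phi^{-1}(u)$ or $\Phi^{-1}(v)$ equals $\pm\infty$, for which $\Phi_2(\pm\infty,\cdot\,;\rho)$ takes precisely the corresponding value for every $\rho$; any choice works and I fix $\rho=0$. So assume $u,v\in(0,1)$ and set $a:=\Phi^{-1}(u)$, $b:=\Phi^{-1}(v)$, both finite.

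Next I would analyze $g(\rho):=\Phi_2(a,b;\rho)$ on $[-1,1]$. The function $g$ is continuous on $[-1,1]$, with boundary values equal to the Fr\'echet--Hoeffding extremes of a Gaussian copula: $g(1)=\min(\Phi(a),\Phi(b))=\min(u,v)$ and $g(-1)=\max(\Phi(a)+\Phi(b)-1,0)=\max(u+v-1,0)$. On $(-1,1)$, Plackett's/Price's formula gives $\partial_\rho g(\rho)=\phi_2(a,b;\rho)$, the standard bivariate normal density, which is strictly positive for finite $a,b$ and $|\rho|<1$; hence $g$ is continuous and strictly increasing on the closed interval $[-1,1]$. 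On the other hand, the Fr\'echet--Hoeffding bounds for the joint CDF give $\max(u+v-1,0)\le w\le\min(u,v)$, i.e.\ $g(-1)\le w\le g(1)$. By the intermediate value theorem there is $\rho\in[-1,1]$ with $g(\rho)=w$, and by strict monotonicity it is unique. Defining $\rho_{Y,Z\mid X}(y,z\mid x)$ to be this value (in all cases, including the degenerate one) yields the representation pointwise in $(y,z,x)$, which is the claim.

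The main obstacle is the two analytic ingredients underlying the one-dimensional argument: (i) the strict monotonicity of $\rho\mapsto\Phi_2(a,b;\rho)$, which I would obtain from $\partial_\rho\Phi_2(a,b;\rho)=\phi_2(a,b;\rho)$ together with positivity of the Gaussian density on $|\rho|<1$; and (ii) the identification of the endpoint values $g(\pm1)$ with the Fr\'echet--Hoeffding comonotone/countermonotone bounds, which must be shown to sandwich the actual joint CDF $w$. Everything else — the cancellation of $\Lambda$ and the handling of marginals hitting $0$ or $1$ — is bookkeeping. I would cite Chernozhukov, Fern\'andez-Val and Luo (2018) for the underlying Gaussian representation and simply note that routing the marginal inputs through the link $\Lambda$ leaves the argument unchanged.
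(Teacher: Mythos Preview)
Your proposal is correct and follows the same route as the paper: observe that $\Lambda(\Lambda^{-1}(u))=u$ collapses the two displays to the standard local Gaussian representation of Chernozhukov, Fern\'andez-Val and Luo (2018), after which the result is theirs. The only difference is that the paper simply cites that lemma, whereas you unpack its proof (Plackett's identity $\partial_\rho\Phi_2(a,b;\rho)=\phi_2(a,b;\rho)>0$ for strict monotonicity, Fr\'echet--Hoeffding for the endpoint values, and the intermediate value theorem); your handling of the degenerate boundary cases $u,v\in\{0,1\}$ is also more explicit than either source.
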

\begin{proof}
    The proof is identical to the proof of Lemma 2.1 of Chernozhukov, Fernand\'ez-Val and Luo (2018) using:
    $$
    \Phi^{-1}(\Lambda(\mu_{Y \mid X}(y \mid x))) = \Phi^{-1}(F_{Y\mid X}(y \mid x))
    $$
    and
    $$
    \Phi^{-1}(\Lambda(\mu_{Z \mid X}(z \mid x))) = \Phi^{-1}(F_{Z\mid X}(z \mid x)).    
    $$
\end{proof}

The difference between Lemma \ref{lemma:lgr} and the LGR of Chernozhukov, Fernand\'ez-Val and Luo (2018) is that the marginals are represented by a general link rather than Gaussian links, that is:
$$
F_{Y\mid X}(y \mid x)(y \mid x) \equiv \Lambda(\mu_{Y \mid X}(y \mid x)), \quad F_{Z\mid X}(z \mid x) \equiv \Lambda(\mu_{Z \mid X}(z \mid x)).
$$
We focus on the case with covariates, which covers the case without covariates as a special case by setting $X=\emptyset$. By the LGR, $F_{Y^0,Z^0 \mid G,T,X}$ can be expressed as: 
 \begin{footnotesize}{  \begin{multline}\label{eq:bdr_with_cov}
    F_{Y^0,Z^0 \mid G,T,X}(y,z \mid g,t,x) \equiv \\ \Phi_2(\Phi^{-1}(\Lambda(\mu_{Y^0 \mid G,T,X}(y \mid g,t,x))), \Phi^{-1}(\Lambda(\mu_{Z^0 \mid G,T,X}(y \mid g,t,x))); \rho_{Y^0,Z^0 \mid G,T,X}(y,z \mid g,t,x)),
\end{multline}}\end{footnotesize} 
where $\mu_{Y^0 \mid G,T,X}(y \mid g,t,x) = \alpha_Y(y,x) + \beta_Y(y,x) t + \gamma_Y(y,x) g + \delta_Y(y,x) gt$, $\mu_{Z^0 \mid G,T,X}(y \mid g,t,x) =  \alpha_Z(z,x) + \beta_Z(z,x) t + \gamma_Z(z,x) g + \delta_Z(z,x) gt$, and $\rho_{Y,Z \mid G,T,X}(y,z \mid g,t,x) = \alpha_{Y,Z}(y,z,x) + \beta_{Y,Z}(y,z,x) t + \gamma_{Y,Z}(y,z,x) g + \delta_{Y,Z}(y,z,x) gt$.

We make the following identifying assumptions with respect to the distribution function in \eqref{eq:bdr_with_cov}, which parallel the assumptions of the univariate case.
 Let $\mathcal{YZ}^d_{gtx}$ denote the support of $(Y^d,Z^d) \mid G=g,T=t, X=x$. 
\begin{assumption}[Bivariate Support Regularity]\label{ass:support_with_cov_biv} 
$$
        \mathcal{YZ}^0_{11x} \subseteq  \mathcal{YZ}^0_{01x} \cup \mathcal{YZ}^0_{10x} \cup \mathcal{YZ}^0_{00x},  \quad x \in \mathcal{X}_{11}.
$$
\end{assumption}

\begin{assumption}[Bivariate No-interaction with Covariates]\label{ass:no_interaction_with_cov_biv} 
$$\delta_Y(y,x) = \delta_Z(z,x) = \delta_{Y,Z}(y,z,x) = 0, \quad (y,z,x) \in \mathcal{YZ}^0_{11x} \times \mathcal{X}_{11} \text{ in \eqref{eq:bdr_with_cov}.}$$
\end{assumption}

\begin{lemma}[Identification with Two Outcomes and Covariates]\label{lemma:biv_with_cov} Under Assumptions \ref{ass:common_support},  \ref{ass:support_with_cov_biv} and \ref{ass:no_interaction_with_cov_biv}, $(y,z) \mapsto F_{Y^0,Z^0 \mid G,T,X}(y,z \mid 1,1,x)$ is identified on $\mathbb{R}^2 \times \mathcal{X}_{11}$.
\end{lemma}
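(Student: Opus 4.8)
\emph{Proof plan.} The plan is to replicate the argument in the proof of Lemma \ref{lemma:biv}, now carrying the conditioning value $X=x$ through every step and invoking Lemma \ref{lemma:did_with_cov} in place of Lemma \ref{lemma:did} for the marginals. Throughout, fix $x \in \mathcal{X}_{11}$; as in the univariate covariate case, I maintain the implicit overlap requirement that $\mathcal{X}_{11}$ is contained in the supports of $X$ given $(G,T)\in\{(0,0),(0,1),(1,0)\}$, so that all conditional distributions invoked below are well defined at $x$.

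First I would record the consequences of Assumption \ref{ass:no_interaction_with_cov_biv}: almost surely $\mu_{Y_0 \mid G,T,X}(y \mid g,t,x) = \alpha_Y(y,x) + \beta_Y(y,x)t + \gamma_Y(y,x)g$, $\mu_{Z_0 \mid G,T,X}(z \mid g,t,x) = \alpha_Z(z,x) + \beta_Z(z,x)t + \gamma_Z(z,x)g$, and $\rho_{Y_0,Z_0 \mid G,T,X}(y,z \mid g,t,x) = \alpha_{Y,Z}(y,z,x) + \beta_{Y,Z}(y,z,x)t + \gamma_{Y,Z}(y,z,x)g$. Next, projecting the inclusion in Assumption \ref{ass:support_with_cov_biv} onto each coordinate yields the univariate support conditions of Assumption \ref{ass:support_with_cov} for both $Y_0$ and $Z_0$; combined with Assumption \ref{ass:no_interaction_with_cov_biv}, Lemma \ref{lemma:did_with_cov} then identifies $(y,x) \mapsto F_{Y_0 \mid G,T,X}(y \mid 1,1,x)$ and $(z,x) \mapsto F_{Z_0 \mid G,T,X}(z \mid 1,1,x)$ on $\mathbb{R} \times \mathcal{X}_{11}$, equivalently the coefficient functions $\alpha_Y(\cdot,x), \beta_Y(\cdot,x), \gamma_Y(\cdot,x)$ and their $Z$-counterparts through \eqref{eq:id_with_cov}.

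Then I would identify the dependence components one at a time, exactly as in Lemma \ref{lemma:biv}. By the LGR (Lemma \ref{lemma:lgr}) applied in the $(G,T)=(0,0)$ cell, $\alpha_{Y,Z}(y,z,x)$ is the solution in $\alpha$ of $F_{Y,Z \mid G,T,X}(y,z \mid 0,0,x) = \Phi_2(\Phi^{-1}(F_{Y \mid G,T,X}(y\mid 0,0,x)), \Phi^{-1}(F_{Z\mid G,T,X}(z \mid 0,0,x)); \alpha)$; for $(y,z)$ in the interior of the relevant support the marginal arguments are finite, $\alpha \mapsto \Phi_2(a,b;\alpha)$ is strictly increasing, and Lemma \ref{lemma:lgr} guarantees the solution exists and is unique. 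With $\alpha_{Y,Z}(y,z,x)$ in hand, the $(0,1)$ cell identifies $\beta_{Y,Z}(y,z,x)$ as the unique solution in $\beta$ of the analogous equation with correlation parameter $\alpha_{Y,Z}(y,z,x)+\beta$ (marginal arguments evaluated at $(0,1)$), and the $(1,0)$ cell identifies $\gamma_{Y,Z}(y,z,x)$ similarly with correlation parameter $\alpha_{Y,Z}(y,z,x)+\gamma$. Finally, setting $t=g=1$ in \eqref{eq:bdr_with_cov} and using Assumption \ref{ass:no_interaction_with_cov_biv} gives
$$
F_{Y_0,Z_0 \mid G,T,X}(y,z \mid 1,1,x) = \Phi_2\bigl(\Phi^{-1}(F_{Y_0 \mid G,T,X}(y\mid 1,1,x)),\, \Phi^{-1}(F_{Z_0 \mid G,T,X}(z \mid 1,1,x));\, \alpha_{Y,Z}(y,z,x) + \beta_{Y,Z}(y,z,x) + \gamma_{Y,Z}(y,z,x)\bigr),
$$
whose right-hand side has just been shown to be identified on $\mathbb{R}^2 \times \mathcal{X}_{11}$.

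The only genuine subtleties, and where I would be most careful, are (i) verifying that the coordinate projections of Assumption \ref{ass:support_with_cov_biv} indeed deliver the marginal support conditions needed to invoke Lemma \ref{lemma:did_with_cov}, and (ii) the monotonicity and well-definedness argument for the $\rho$-components, i.e. that at interior $(y,z)$ the arguments fed into $\Phi_2$ are finite so strict monotonicity in the correlation parameter applies, with the boundary cases handled by the same limiting argument noted after \eqref{eq:id_with_cov}. Everything else is bookkeeping parallel to the proofs of Lemmas \ref{lemma:did_with_cov} and \ref{lemma:biv}.
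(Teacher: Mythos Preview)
Your proposal is correct and follows precisely the route the paper implicitly intends: the paper does not spell out a separate proof for Lemma~\ref{lemma:biv_with_cov}, but its proof of Lemma~\ref{lemma:biv} is the template, and your argument is exactly that proof with the conditioning value $x$ carried through and Lemma~\ref{lemma:did_with_cov} substituted for Lemma~\ref{lemma:did} on the marginals. The two subtleties you flag (projecting the bivariate support condition to obtain the marginal ones, and the strict monotonicity of $\rho\mapsto\Phi_2(a,b;\rho)$ at interior points) are the only places requiring care, and you handle them appropriately.
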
%

\begin{proof}[Proof of Lemma \ref{lemma:biv_with_cov}] We drop the covariates to lighten the notation. Under the assumptions of the Lemma, $\mu_{Y^0 \mid G,T}(y \mid g,t) = \alpha_Y(y) + \beta_Y(y) t + \gamma_Y(y) g$, $\mu_{Z^0 \mid G,T}(y \mid g,t) = \alpha_Z(z) + \beta_Z(z) t + \gamma_Z(z) g$, and $\rho_{Y,Z \mid G,T}(y,z \mid g,t) = \alpha_{Y,Z}(y,z) + \beta_{Y,Z}(y,z) t + \gamma_{Y,Z}(y,z) g$.

The parameters $\alpha_Y(y)$, $\beta_Y(y)$, $\gamma_Y(y)$, $\alpha_Z(z)$, $\beta_Z(z)$, and $\gamma_Z(z)$ are identified on $\mathcal{YZ}^0_{11}$ from the marginals of $Y$ and $Z$, by Lemma \ref{lemma:did} (or Lemma \ref{lemma:did_with_cov} when we have covariates). The parameter $\alpha_{Y,Z}(y,z)$ is identified on $\mathcal{YZ}^0_{11}$ as the solution in $\alpha$ to:
$$
F_{Y,Z \mid G,T}(y,z \mid 0,0) = \Phi_2(\Phi^{-1}(\Lambda(\alpha_Y(y))), \Phi^{-1}(\Lambda(\alpha_Z(z)))  ; \alpha).
$$
This solution exists and is unique because the RHS is strictly increasing in $\alpha$. The parameters $\beta_{Y,Z}(y,z)$ and $\gamma_{Y,Z}(y,z)$ are identified on $\mathcal{YZ}^0_{11}$ similarly as the solutions in $\beta$ and $\gamma$ of:
$$
F_{Y,Z \mid G,T}(y,z \mid 0,1) = \Phi_2(\Phi^{-1}(\Lambda(\alpha_Y(y) + \beta_Y(y))), \Phi^{-1}(\Lambda(\alpha_Z(z) + \beta_Z(z))) ; \alpha_{Y,Z}(y,z) + \beta).
$$
and
$$
F_{Y,Z \mid G,T}(y,z \mid 1,0) = \Phi_2(\Phi^{-1}(\Lambda(\alpha_Y(y) + \gamma_Y(y))), \Phi^{-1}(\Lambda(\alpha_Z(z) + \gamma_Z(z)))  ; \alpha_{Y,Z}(y,z) + \gamma).
$$
Finally, for $(y,z) \in \mathcal{YZ}^0_{11}$,
\[
\begin{split}
 F_{Y^0,Z^0 \mid G,T}(y,z \mid 1,1) & =  \Phi_2(\Phi^{-1}(\Lambda(\alpha_Y(y) + \beta_Y(y)  + \gamma_Y(y))), \Phi^{-1}(\Lambda(\alpha_Z(z) \beta_Z(z)  + \gamma_Z(z))); \\ & \alpha_{Y,Z}(y,z) +  \beta_{Y,Z}(y,z) + \gamma_{Y,Z}(y,z)).
\end{split}
\]
For $(y,z) \not\in \mathcal{YZ}^0_{11}$, we extend $F_{Y^0,Z^0 \mid G,T}(y,z \mid 1,1)$ to $0$ and $1$ similarly to the proof of Lemma \ref{lemma:did}.
\end{proof}

The marginalized distribution $F_{Y^0,Z^0 \mid G,T}(y \mid 1,1)$ is then identified by
$$
F_{Y^0,Z^0 \mid G,T}(y,z \mid 1,1) = \int_{\mathcal{X}_{11}}  F_{Y^0,Z^0 \mid G,T,X}(y,z \mid 1,1,x) \mathrm{d} F_{X \mid G,T}(x \mid 1,1).
$$

\section{Estimation}
\label{sec:estimation}
We develop estimators for the univariate and bivariate models with covariates based on distribution regression (DR). Estimators for the models without covariates can be formed using sample analogs of the expressions that identify the objects of interest.

\subsection{Univariate Case}
It is convenient to introduce a generic outcome $R$, which will be $Y$ in this subsection and either $Y$ or $Z$ in the bivariate case of next subsection.  Assume we have a sample $\{(R_{i}, X_{i}, G_{i}, T_{i}): 1\leq i \leq N\}$ of $(R, X, G, T)$. 

For estimation, we replace the  functions $(r,x) \mapsto (\alpha(r,x), \beta(r,x), \gamma(r,x))$ in \eqref{eq:dr_with_cov} by semiparametric linear indexes leading to the DR model for the conditional distribution:
\begin{equation} \label{eq:drx}
    F_{R^0 \mid G, T, X}(r \mid g,t,x) = \Lambda(p_{\alpha}(x)^\top\alpha(r) + p_{\beta}(x)^\top\beta(r) t + p_{\gamma}(x)^\top\gamma(r)g), \quad r \in \bar{\mathcal{R}}^0_{11},
\end{equation}
where $p_{\alpha}(x)$, $p_{\beta}(x)$ and $p_{\gamma}(x)$ are vectors including a constant as the first component and transformations of the covariates, $r \mapsto (\alpha(r), \beta(r), \gamma(r))$ is a vector of function-valued parameters, and $\bar{\mathcal{R}}^0_{11}$ is a strict compact subset of $\mathcal{R}^0_{11}$, the support of $R^0 \mid G=1, T=1$. 

Decompose $\alpha(r) = (\alpha_1(r),\alpha_{-1}(r)^\top)^\top$, $\beta(r) = (\beta_1(r),\beta_{-1}(r)^\top)^\top$ and $\gamma(r) = (\gamma_1(r),\gamma_{-1}(r)^\top)^\top$, where the coefficient indexed by $1$ corresponds to the intercept. To deal with the difficulties of tail estimation, we impose: for $r \in \mathcal{R}^0_{11} \setminus \bar{\mathcal{R}}^0_{11}$,
$$
\alpha_{1}(r) = \alpha_1(\bar r_r) + (r - \bar r_r) \alpha_{1\bar r_r}, \beta_1(r) = \beta_1(\bar r_r) + (r - \bar r_r) \beta_{1\bar r_r}, \gamma_1(r) = \gamma_1(\bar r_r) + (r - \bar r_r) \gamma_{1\bar r_r},
$$
and
$$
\alpha_{-1}(r) = \alpha_{-1}(\bar r_r), \quad \beta_{-1}(r) = \beta_{-1}(\bar r_r), \quad \gamma_{-1}(r) = \gamma_{-1}(\bar r_r),
$$
where $\bar{r}_r = \arg \min_{r' \in \bar{\mathcal{R}}_{11}} |r - r'|$, the point of $\bar{\mathcal{R}}^0_{11}$ that is the closest to $r$, and $\alpha_{1\bar r_r}> 0$, $\beta_{1\bar r_r} > 0$, and $\gamma_{1\bar r_r}>0$ are tail parameters. See Chernozhukov et al. (2025) for a discussion of these tail restrictions.


 We estimate all the parameters via the sequence of logit regressions at a grid of points of the support of the  outcome variable (Foresi and Peracchi, 1995, Chernozhukov, Fernandez-Val and Melly, 2013). We choose logit because it is the canonical link for binary outcomes allowing for pooled estimation of the distributions of the potential outcomes with and without the treatment (Wooldridge, 2023). In particular, we estimate the working model:
\begin{equation} \label{eq:uni}
    F_{R \mid G, T, X}(r \mid g,t,x) = \Lambda(p_{\alpha}(x)^\top\alpha(r) + p_{\beta}(x)^\top\beta(r) t + p_{\gamma}(x)^\top\gamma(r)g + p_{\theta}(x)^\top\theta(r)gt), \ \ y \in \mathcal{R}_{N,11},
\end{equation}
where $p_{\alpha}(x)$, $p_{\beta}(x)$, $p_{\gamma}(x)$, $\alpha(r)$, $\beta(r)$ and $\gamma(r)$ are the same as in the model \eqref{eq:drx},  $p_{\theta}(x)$ is a vector of transformations of the covariates to be specified below, $\theta(r)$ is a function-valued parameter that satisfies tail restrictions analogous to the ones satisfied by $\alpha(r)$, $\beta(r)$ and $\gamma(r)$, and $\mathcal{R}_{N,11}$ is a finite grid covering  $\mathcal{R}_{11}$, the support of $R \mid G=1, T=1$,  using the entire sample. Let $I_i^r := 1(R_i \leq r)$ and $\bar I_i^r = 1-I_i^r$.


\begin{algorithm}[Univariate Estimator] \label{algorithm_1}
\begin{enumerate}
\item Estimate the parameters of model \eqref{eq:uni} by unrestricted and restricted DR, that is, for $r \in \bar{\mathcal{R}}_{11}^0 \cap \mathcal{R}_{N,11}$,
\begin{footnotesize}{\begin{multline*}
    (\hat \alpha(r), \hat \beta(r), \hat \gamma(r), \hat \theta(r)) \in \arg \max_{a,b,c,d} \sum_{i=1}^N \ell_i(a,b,c,d),\\
    \ell_i(a,b,c,d) =  I_i^r \log \Lambda(p_{\alpha}(X_i)^\top a + p_{\beta}(X_i)^\top b \ T_i + p_{\gamma}(X_i)^\top c \ G_i + p_{\theta}(X_i)^\top d \ G_i T_i)  \\
     +  \bar I_i^r \log \Lambda(-p_{\alpha}(X_i)^\top a - p_{\beta}(X_i)^\top b \ T_i - p_{\gamma}(X_i)^\top c \ G_i - p_{\theta}(X_i)^\top d \ G_i T_i);
\end{multline*}}\end{footnotesize}
and for $r \in \mathcal{R}_{N,11} \setminus(\bar{\mathcal{R}}_{11}^0 \cap \mathcal{R}_{N,11})$,
\begin{multline*}
\hat \alpha_{-1}(r) = \hat \alpha_1(\bar r_r) + (r - \bar r_r) \hat \alpha_{1\bar r_r}, \quad \hat \beta_1(r) = \hat \beta_1(\bar r_r) + (r - \bar r_r) \hat \beta_{1\bar r_r}, \\
\hat \gamma_1(r) = \hat \gamma_1(\bar r_r) + (r - \bar r_r) \hat \gamma_{1\bar r_r}, \quad \hat \theta_1(r) = \hat \theta_1(\bar r_r) + (r - \bar r_r) \hat \theta_{1\bar r_r},
\end{multline*}
and
$$
\hat \alpha_{-1}(r) = \hat \alpha_{-1}(\bar r_r), \quad \hat \beta_{-1}(r) = \hat \beta_{-1}(\bar r_r), \quad \hat \gamma_{-1}(r) = \hat \gamma_{-1}(\bar r_r), \quad \hat \theta_{-1}(r) = \hat \theta_{-1}(\bar r_r),
$$
where $\bar{r}_r = \arg \min_{r' \in \bar{\mathcal{R}}_{11}} |r - r'|$,
\begin{footnotesize}{\begin{multline*}
    (\hat \alpha_{1\bar r_r}, \hat \beta_{1\bar r_r}, \hat \gamma_{1\bar r_r}, \hat \theta_{1\bar r_r}) \in \arg \max_{a,b,c,d} \sum_{i=1}^N \ell^{r_0}_i(a,b,c,d; \hat \mu^{\bar r_r}_i), \\ 
    \hat \mu^{\bar r_r}_i= p_{\alpha}(X_i)^\top \hat \alpha(\bar r_r) + p_{\beta}(X_i)^\top \hat \beta(\bar r_r) \ T_i + p_{\gamma}(X_i)^\top \hat \gamma(\bar r_r) \ G_i + p_{\theta}(X_i)^\top \hat \theta(\bar r_r) \ G_i T_i, \\
    \ell^{r_0}_i(a,b,c,d;\mu^{\bar r_r}_i) =  I_i^r \log \Lambda( (r^0 - \bar r_r) (a + b \ T_i +  c \ G_i +  d \ G_i T_i) + \mu^{\bar r_r}_i)  \\
     +  \bar I_i^r \log \Lambda(-(r^0 - \bar r_r) (a + b \ T_i +  c \ G_i +  d \ G_i T_i) - \mu^{\bar r_r}_i),
\end{multline*}}\end{footnotesize}
and $r_0 \in \mathcal{R}_{N,11} \setminus \bar{\mathcal{R}}_{11}^0$ is such that (i) there are at least $m$ observations between $\bar r_r$ and $r_0$, and greater than $r_0$ if $r_0 > \bar r_r$ (upper tail) or less than $r_0$ if $r_0 < \bar r_r$ (lower tail), and (ii) $\hat \alpha_{1\bar r_r} > 0$, $\hat \beta_{1\bar r_r} > 0$, $\hat \gamma_{1\bar r_r} > 0$ and $\hat \theta_{1\bar r_r} > 0$.
\item Construct plug-in estimators of the distributions of the potential outcomes, for $r \in \mathcal{R}_{N,11}$, 
\begin{footnotesize}{$$
\hat F_{R^0 \mid G, T}(r \mid 1,1) = \frac{1}{N_{11}} \sum_{i=1}^N G_i T_i \ \Lambda(p_{\alpha}(X_i)^\top \hat \alpha(r) + p_{\beta}(X_i)^\top \hat \beta(r) + p_{\gamma}(X_i)^\top \hat \gamma(r)),
$$}\end{footnotesize}
and
\begin{footnotesize}{$$
\hat F_{R^1 \mid G, T}(r \mid 1,1) = \frac{1}{N_{11}} \sum_{i=1}^N G_i T_i \ \Lambda(p_{\alpha}(X_i)^\top \hat \alpha(r) + p_{\beta}(X_i)^\top \hat \beta(r) + p_{\gamma}(X_i)^\top \hat \gamma(r) + p_{\theta}(X_i)^\top \hat \theta(r)),
$$}\end{footnotesize} 
where $N_{11} = \sum_{i=1}^N G_i T_i$.
\item If needed, rearrange the estimates $r \mapsto \hat F_{R^d \mid G, T}(r \mid 1,1)$ on $\mathcal{R}_{N,11}$, $d \in \{0,1\}$, to make them increasing.
\item Estimate functionals of the distributions of the potential outcome variables by using a plugin estimator, \emph{i.e.} the distributional treatment effect is estimated by
\[
    \widehat \tau(r) = \widehat F_{R^1 \mid G,T}(r \mid 1,1) - \widehat F_{R^0 \mid G,T}(y \mid 1,1), \quad r \in  \mathcal{R}_{N,11},
\]
and the quantile treatment effect is estimated by:
\[
     \widehat \tau ^*_q = \widehat F^{\gets}_{R^1 \mid G,T}(q \mid 1,1) - \widehat F^{\gets}_{R^0 \mid G,T}(q \mid 1,1), \quad q \in (0,1),
\]
where:
\[
	\widehat F^{\gets}_{R^{d} \mid G,T}(q \mid 1,1) = \min \{ r \in \mathcal{R}_{N,11}: \widehat F_{R^{d} \mid G,T}(r \mid 1,1) \leq q\} \quad d \in\{ 0,1\}.
\]
\end{enumerate}
\end{algorithm}

By the properties of the logistic link, the estimator of $F_{R^1 \mid G, T}(y \mid 1,1)$ is identical to the empirical distribution of $R$ conditional on $G=1$ and $T=1$,
$$
\hat F_{R^1 \mid G, T}(y \mid 1,1) \equiv \frac{1}{N_{11}} \sum_{i=1}^N  G_i T_i  \ 1(R_i \leq r).
$$
Note that this estimator is therefore invariant to the specification of $p_{\theta}(x)$. We set $p_{\theta}(x)=1$ to speed up computation. 
It is possible to estimate the parameters $\alpha(r)$, $\beta(r)$ and $\gamma(r)$ separately from $\theta(r)$ by using only observations with $D=0$. Our joint estimation is convenient, however, for practical inference using standard software. For example, the distributional treatment effect $\tau(y)$ corresponds to the average partial effect of $D$ in the working model \eqref{eq:uni}, so that estimates and standard errors can be obtained using existing statistical packages for logistic regression (Wooldridge, 2023).

\begin{remark}[Computation] The set $\mathcal{R}_n$ should be chosen as a fine mesh with width $\delta$ such that $\delta\sqrt{N} \to 0$. If $\mathcal{R}_n$ contains many elements, we can employ a computationally fast method similar to Chernozhukov et al. (2022) to speed-up computation. Note that the restricted optimization program to obtain $(\hat \alpha_{1\bar r_r}, \hat \beta_{1\bar r_r}, \hat \gamma_{1\bar r_r}, \hat \theta_{1\bar r_r})$ in step (1) only needs to be solved twice, once for $r_0$ in the upper tail and once for $r_0$ in the lower tail. 
Also, we recommend choosing $m \geq 30$,  which is thought to be the minimal sample size required to estimate one parameter.
\end{remark}

\subsection{Bivariate Case}

Assume we have a sample $\{(Y_{i}, Z_{i}, X_{i}, G_{i}, T_{i}): 1\leq i \leq N\}$ of $(Y, Z, X, G, T)$. For estimation, as in the univariate case, we replace the  functions in $\mu_{Y^0 \mid G,T,X}$, $\mu_{Z^0 \mid G,T,X}$ and $\rho_{Y,Z \mid G,T,X}$ by semiparametric generalized linear indexes leading to a bivariate distribution regression (BDR) model:
\begin{equation} \label{eq:bdrx1}
\mu_{Y^0 \mid G,T,X}(y \mid g,t,x) = p_{\alpha}(x)^\top \alpha_Y(y) + p_{\beta}(x)^\top \beta_Y(y) t + p_{\gamma}(x)^\top \gamma_Y(y)g,
\end{equation}
\begin{equation} \label{eq:bdrx2}
\mu_{Z^0 \mid G,T,X}(y \mid g,t,x) = q_{\alpha}(x)^\top \alpha_Z(z) + q_{\beta}(x)^\top \beta_Z(z) t + q_{\gamma}(x)^\top \gamma_Z(z)g,
\end{equation}
and
\begin{equation} \label{eq:bdrx3}
\rho_{Y^0,Z^0 \mid G,T,X}(y,z \mid g,t,x) = h(r_{\alpha}(x)^\top \alpha_{Y,Z}(y,z) + r_{\beta}(x)^\top \beta_{Y,Z}(y,z) t + r_{\gamma}(x)^\top \gamma_{Y,Z}(y,z)g),
\end{equation}
where $p_{\alpha}(x)$, $p_{\beta}(x)$, $p_{\gamma}(x)$, $q_{\alpha}(x)$, $q_{\beta}(x)$, $q_{\gamma}(x)$, $r_{\alpha}(x)$, $r_{\beta}(x)$ and $r_{\gamma}(x)$ are vectors including the covariates and their transformations, and $h(u) = \text{tanh}(u)$ is the Fisher transformation that enforces $\rho_{Y,Z \mid G,T,X}$ to lie in $[-1,1]$. 

To deal with the challenges of tail estimation, we impose the same restrictions on the parameters of $\mu_{Y^0 \mid G,T,X}$ and $\mu_{Y^0 \mid G,T,X}$ as in the univariate case. For the parameters of $\rho_{Y^0,Z^0 \mid G,T,X}$, we impose the tail restrictions, for $(y,z) \in \bar{\mathcal{Y}}_{11}^0 \times \bar{\mathcal{Z}}_{11}^0$,
$$
\alpha_{Y,Z}(y,z) = \alpha_{Y,Z}(\bar y_y, \bar z_z), \quad \beta_{Y,Z}(y,z) = \beta_{Y,Z}(\bar y_y, \bar z_z), \quad \gamma_{Y,Z}(y,z) = \gamma_{Y,Z}(\bar y_y, \bar z_z),
$$
where $\bar{\mathcal{Y}}_{11}^0$  and $\bar{\mathcal{Z}}_{11}^0$ are strict compact subsets of $\mathcal{Y}_{11}^0$  and $\mathcal{Z}_{11}^0$,  the supports of $Y^0 \mid G=1, T=1$ and $Z^0 \mid G=1, T=1$.  This restriction imposes that the local dependence paramaters remain constant outside a compact subset of the support. 

 We estimate all the parameters  using  bivariate distribution regression (Chernozhukov et al., 2025). To estimate the counterfactual distribution $F_{Y^0,Z^0 \mid G,T}(y,z \mid 1,1)$, we employ an imputation method that combines parameter estimates from the sample of the first period for both groups and the sample of the second period for the untreated group, with the sample of the covariates in the second period for the treated group. The distribution $F_{Y^1,Z^1 \mid G,T}(y,z \mid 1,1)$ is estimated using the empirical distribution of $Y$ and $Z$ in the second period for the treated group. Algorithm \ref{algorithm_2} describes the estimation procedure. Let $\mathcal{Y}^0_{N,11}$ and $\mathcal{Z}^0_{N,11}$ be finite grids covering $\mathcal{Y}^0_{11}$ and $\mathcal{Z}^0_{11}$,  $I_i^y := 1(Y_i \leq y)$,  $\bar I_i^y = 1-I_i^y$, $J_i^z := 1(Z_i \leq z)$, and $\bar J_i^z = 1-J_i^z$.


\begin{algorithm}[Bivariate Estimator] \label{algorithm_2}
\begin{enumerate}
\item 
For $y \in \mathcal{Y}^0_{N,11}$ and $z \in \mathcal{Z}^0_{N,11}$,  obtain 
$$
\hat m_i^Y(y) = \Phi^{-1}(\Lambda(p_{\alpha}(X_i)^\top \hat \alpha_Y(y) + p_{\beta}(X_i)^\top \hat \beta_Y(y) \ T_i  + p_{\gamma}(X_i)^\top \hat \gamma_Y(y) \ G_i)), 
$$
and
$$
\hat m_i^Z(z) = \Phi^{-1}(\Lambda(q_{\alpha}(X_i)^\top \hat \alpha_Z(z) + q_{\beta}(X_i)^\top \hat \beta_Z(z) \ T_i + q_{\gamma}(X_i)^\top \hat \gamma_Z(z) \ G_i)),
$$
where $\hat \alpha_Y(y)$, $\hat \beta_Y(y)$, $\hat \gamma_Y(y)$, $\hat \alpha_Z(z)$, $\hat \beta_Z(z)$ and  $\hat \gamma_Z(z)$ are the estimates of $\alpha_Y(y)$, $\beta_Y(y)$, $\gamma_Y(y)$, $\alpha_Z(z)$, $\beta_Z(z)$ and  $\gamma_Z(z)$  obtained from Algorithm \ref{algorithm_1}.
\item Estimate the parameters of the dependence function in \eqref{eq:bdrx3} by unrestricted and restricted BDR, that is, for $(y,z) \in \overline{\mathcal{YZ}}^0_{N,11} := (\mathcal{Y}^0_{N,11} \cap \bar{\mathcal{Y}}^0_{11}) \times (\mathcal{Z}^0_{N,11} \cap \bar{\mathcal{Z}}^0_{11})$,
\begin{footnotesize}{\begin{multline*}
    (\hat \alpha_{Y,Z}(y,z), \hat \beta_{Y,Z}(y,z), \hat \gamma_{Y,Z}(y,z)) \in \arg \max_{a,b,c} \sum_{i=1}^N (1 - G_i T_i) \ \ell_i(a,b,c),\\
    \ell_i(a,b,c) =  I_i^y J_i^z \log \Phi_2(\hat m_i^Y(y), \hat m_i^Z(z); h(r_{\alpha}(X_i)^\top a + r_{\beta}(X_i)^\top b \ T_i + r_{\gamma}(X_i)^\top c \ G_i))  \\
     +  I_i^y \bar J_i^z \log \Phi_2(\hat m_i^Y(y), - \hat m_i^Z(z); - h(r_{\alpha}(X_i)^\top a + r_{\beta}(X_i)^\top b \ T_i + r_{\gamma}(X_i)^\top c \ G_i))  \\
    + \bar I_i^y J_i^z \log \Phi_2(-\hat m_i^Y(y), \hat m_i^Z(z); - h(r_{\alpha}(X_i)^\top a + r_{\beta}(X_i)^\top b \ T_i + r_{\gamma}(X_i)^\top c \ G_i))  \\
    + \bar I_i^y \bar J_i^z \log \Phi_2(-\hat m_i^Y(y), -\hat m_i^Z(z); h(r_{\alpha}(X_i)^\top a + r_{\beta}(X_i)^\top b \ T_i + r_{\gamma}(X_i)^\top c \ G_i)),
\end{multline*}}
\end{footnotesize}
where $\bar y_y := \arg \min_{y' \in \bar{\mY}_n} |y-y'|$ and $\bar w_w := \arg \min_{w' \in \bar{\mW}_n} |w-w'|$; and, for $(y,z) \in (\mathcal{Y}^0_{N,11} \times \mathcal{Z}^0_{N,11}) \setminus \overline{\mathcal{YZ}}^0_{N,11}$,
$$
\hat \alpha_{Y,Z}(y,z) = \hat \alpha_{Y,Z}(\bar y_y, \bar z_z), \quad \hat \beta_{Y,Z}(y,z) = \hat \beta_{Y,Z}(\bar y_y, \bar z_z), \quad \hat \gamma_{Y,Z}(y,z) = \hat \gamma_{Y,Z}(\bar y_y, \bar z_z).
$$
\item Construct plug-in estimators of the distributions of the potential outcomes 
$$
\hat F_{Y^0, Z^0 \mid G, T}(y,z \mid 1,1) = \frac{1}{N_{11}} \sum_{i=1}^N G_i T_i \ \Phi_2(\hat m_{11i}^Y(y), \hat m_{11i}^Z(z); \hat m_{11i}^{Y,Z}(y,z)),
$$
and
$$
\hat F_{Y^1,Z^1 \mid G, T}(y,z \mid 1,1) = \frac{1}{N_{11}} \sum_{i=1}^N G_i T_i \ 1(Y_i \leq y, Z_i \leq z),
$$
where 
$$
\hat m_{11i}^Y(y) = \Phi^{-1}(\Lambda(p_{\alpha}(X_i)^\top \hat \alpha_Y(y) + p_{\beta}(X_i)^\top \hat \beta_Y(y)  + p_{\gamma}(X_i)^\top \hat \gamma_Y(y))), 
$$
$$
\hat m_{11i}^Z(z) = \Phi^{-1}(\Lambda(q_{\alpha}(X_i)^\top \hat \alpha_Z(z) + q_{\beta}(X_i)^\top \hat \beta_Z(z)  + q_{\gamma}(X_i)^\top \hat \gamma_Z(z) )),
$$
$$
\hat m_{11i}^{Y,Z}(y,z) = h(r_{\alpha}(X_i)^\top \hat \alpha_{Y,Z}(y,z) + r_{\beta}(X_i)^\top \hat \beta_{Y,Z}(y,z) + r_{\gamma}(X_i)^\top \hat \gamma_{Y,Z}(y,z))
$$
and 
$N_{11} = \sum_{i=1}^N G_i T_i$.
\item If needed, rearrange the estimate $(y,z) \mapsto \hat F_{Y^0,Z^0 \mid G, T}(y,z \mid 1,1)$ on $\mathcal{Y}_{N,11} \times \mathcal{Z}_{N,11}$,  to make it increasing in both dimensions.
\end{enumerate}
\end{algorithm}
Estimators of the functionals of the joint distributions of potential outcomes such as Spearman's and Kendall's rank correlation coefficients can be constructed using the plug-in principle.

\subsection{Bootstrap Inference}
The estimators described in Algorithms \ref{algorithm_1} and \ref{algorithm_2} can be applied to panel and repeated cross-section data. Here we describe a weighted bootstrap algorithm to perform inference on functions of the distributions of potential outcomes designed for panel data. We focus on this case because it is relevant for our empirical application below. 

To describe the procedure, we need to introduce an indicator $ID_i$, $i = 1, \ldots, N$, for the units in the panel. For example, if the sample is sorted by unit and time period, $ID = (1,1,2,2,\ldots,n,n)$, where $n=N/2$. The following algorithm describes the weighted bootstrap procedure to construct joint confidence bands for the distributions of the potential outcomes with and without the treatment in the univariate case. Inference for functionals of the distributions and for the bivariate case can be performed using similar algorithms.

 \begin{algorithm}[Weighted Bootstrap Inference for Panel Data] \label{algorithm_3}
\begin{enumerate}
\item Choose the number of bootstrap repetitions $B$, e.g., $B=500$ or $B=1,000$.
\item Draw weights for each unit independent and identically from the standard exponential distribution, independently for the data. Construct a vector of weights $ \boldsymbol{\omega} = (\omega_1,\ldots,\omega_N)$, where $\omega_i = \omega_j$ if $ID_i = ID_j$, and normalize the components of $ \boldsymbol{\omega}$ to add up to one.

\item Estimate the parameters of model \eqref{eq:uni} by unrestricted and restricted weighted DR, that is, for $r \in \bar{\mathcal{R}}_{11}^0 \cap \mathcal{R}_{N,11}$,
\begin{footnotesize}{\begin{multline*}
    (\hat \alpha^b(r), \hat \beta^b(r), \hat \gamma^b(r), \hat \theta^b(r)) \in \arg \max_{a,b,c,d} \sum_{i=1}^N \omega_i \ell_i(a,b,c,d),\\
    \ell_i(a,b,c,d) =  I_i^r \log \Lambda(p_{\alpha}(X_i)^\top a + p_{\beta}(X_i)^\top b \ T_i + p_{\gamma}(X_i)^\top c \ G_i + p_{\theta}(X_i)^\top d \ G_i T_i)  \\
     +  \bar I_i^r \log \Lambda(-p_{\alpha}(X_i)^\top a - p_{\beta}(X_i)^\top b \ T_i - p_{\gamma}(X_i)^\top c \ G_i - p_{\theta}(X_i)^\top d \ G_i T_i);
\end{multline*}}\end{footnotesize}
and for $r \in \mathcal{R}_{N,11} \setminus(\bar{\mathcal{R}}_{11}^0 \cap \mathcal{R}_{N,11})$,
\begin{multline*}
\hat \alpha^b_{-1}(r) = \hat \alpha^b_1(\bar r_r) + (r - \bar r_r) \hat \alpha^b_{1\bar r_r}, \quad \hat \beta^b_1(r) = \hat \beta^b_1(\bar r_r) + (r - \bar r_r) \hat \beta^b_{1\bar r_r},\\
\hat \gamma^b_1(r) = \hat \gamma^b_1(\bar r_r) + (r - \bar r_r) \hat \gamma^b_{1\bar r_r}, \quad \hat \theta^b_1(r) = \hat \theta^b_1(\bar r_r) + (r - \bar r_r) \hat \theta^b_{1\bar r_r},
\end{multline*}
and
$$
\hat \alpha^b_{-1}(r) = \hat \alpha^b_{-1}(\bar r_r), \quad \hat \beta^b_{-1}(r) = \hat \beta^b_{-1}(\bar r_r), \quad \hat \gamma^b_{-1}(r) = \hat \gamma^b_{-1}(\bar r_r), \quad \hat \theta^b_{-1}(r) = \hat \theta^b_{-1}(\bar r_r),
$$
where $\bar{r}_r = \arg \min_{r' \in \bar{\mathcal{R}}_{11}} |r - r'|$,
\begin{footnotesize}{\begin{multline*}
    (\hat \alpha^b_{1\bar r_r}, \hat \beta^b_{1\bar r_r}, \hat \gamma^b_{1\bar r_r}, \hat \theta^b_{1\bar r_r}) \in \arg \max_{a,b,c,d} \sum_{i=1}^N \omega_i \ell^{r_0}_i(a,b,c,d; \hat \mu^{b,\bar r_r}_i), \\ 
    \hat \mu^{b, \bar r_r}_i= p_{\alpha}(X_i)^\top \hat \alpha^b(\bar r_r) + p_{\beta}(X_i)^\top \hat \beta^b(\bar r_r) \ T_i + p_{\gamma}(X_i)^\top \hat \gamma^b(\bar r_r) \ G_i + p_{\theta}(X_i)^\top \hat \theta^b(\bar r_r) \ G_i T_i, \\
    \ell^{r_0}_i(a,b,c,d;\mu^{\bar r_r}_i) =  I_i^r \log \Lambda( (r^0 - \bar r_r) (a + b \ T_i +  c \ G_i +  d \ G_i T_i) + \mu^{\bar r_r}_i)  \\
     +  \bar I_i^r \log \Lambda(-(r^0 - \bar r_r) (a + b \ T_i +  c \ G_i +  d \ G_i T_i) - \mu^{\bar r_r}_i),
\end{multline*}}\end{footnotesize}
and $r_0$ is the same as in Algorithm \ref{algorithm_1}.
\item Construct plug-in weighted estimators of the distributions of the potential outcomes, for $r \in \mathcal{R}_{N,11}$, 
\begin{footnotesize}{$$
\hat F^b_{R^0 \mid G, T}(r \mid 1,1) = \frac{1}{N^b_{11}} \sum_{i=1}^N \omega_i G_i T_i \ \Lambda(p_{\alpha}(X_i)^\top \hat \alpha^b(r) + p_{\beta}(X_i)^\top \hat \beta^b(r) + p_{\gamma}(X_i)^\top \hat \gamma^b(r)),
$$}\end{footnotesize}
and
\begin{footnotesize}{$$
\hat F^b_{R^1 \mid G, T}(r \mid 1,1) = \frac{1}{N_{11}} \sum_{i=1}^N \omega_i G_i T_i \ \Lambda(p_{\alpha}(X_i)^\top \hat \alpha^b(r) + p_{\beta}(X_i)^\top \hat \beta^b(r) + p_{\gamma}(X_i)^\top \hat \gamma^b(r) + p_{\theta}(X_i)^\top \hat \theta^b(r)),
$$}\end{footnotesize} 
where $N^b_{11} = \sum_{i=1}^N \omega_i G_i T_i$. If needed, rearrange the estimates $r \mapsto \hat F^b_{R^d \mid G, T}(y \mid 1,1)$ on $\mathcal{R}_{N,11}$, $d \in \{0,1\}$, to make them increasing.


\item Repeat steps 1-3 $B$ times to obtain $$\left\{ \hat F_{R^0 \mid G, T}^b(r \mid 1,1), \hat F_{R^1 \mid G, T}^b(r \mid 1,1): r \in \mathcal{R}_{N,11}, 1\leq b \leq B  \right\}.$$
\item Construct an estimator of the $(1-\alpha)$-critical value of the maximal t-statistic, $\bar t_{\mathcal{R}_{11}}(1-\alpha)$, as the $(1-\alpha)$-quantile of $\{\bar t_{\mathcal{Y}_{11}}^b: 1 \leq b \leq B\}$, where
$$
\bar t_{\mathcal{R}_{11}}^b = \max_{r \in \mathcal{R}_{N,11}} \left[\frac{|\hat F_{R^0 \mid G, T}^b(r \mid 1,1) - \hat F_{R^0 \mid G, T}(r \mid 1,1)|}{S^0(r)}, \frac{|\hat F_{R^1 \mid G, T}^b(y \mid 1,1) - \hat F_{R^1 \mid G, T}(y \mid 1,1)|}{S^1(r)} \right],
$$
and $S^d(r)$ is the interquartile range of $\left\{ \hat F_{R^d \mid G, T}^b(r \mid 1,1): 1\leq b \leq B  \right\}$ divided by $1.34896$, the interquartile range of the standard normal distribution, for $d \in \{0,1\}$.
\item Construct the joint $(1-\alpha)$-confidence bands as
$$
CB_{1-\alpha}[F_{R^d \mid G, T}(\cdot \mid 1,1)] = \{\hat F_{R^d \mid G, T}(r \mid 1,1) \ \pm \ \bar t_{\mathcal{R}_{11}}(1-\alpha) S^d(r) : r \in \mathcal{R}_{N,11}\}, \quad d \in \{0,1\}.
$$
\end{enumerate}
\end{algorithm}

\begin{remark}[Empirical Bootstrap] Empirical bootstrap can be implemented by drawing the weights in step 1 from a multinomial distribution with values $1,\ldots,n$ and equal probabilities $1/n$.
\end{remark}

\begin{remark}[Repeated Cross Sectional Data] An analogous algorithm can be used for repeated cross sectional data. The only modification is that we do not impose the restriction $\omega_i = \omega_j$ if $ID_i = ID_j$ in step 2. 
\end{remark}

\section{Asymptotic theory}

\label{sec:asymptotic}
To simplify the expressions in this section we use the notation $\alpha_r$ instead of $\alpha(r)$ for the function-valued parameter vector $\alpha_r: \mathcal{R}^0_{11} \to \mathbb{R}^{d_{\alpha}}$, where $d_{\alpha}$ is the dimension of the vector $p_{\alpha}(x)$.
We adopt similar notation for all the function-valued parameters. It is also convenient to introduce the following notation for the tails. Let $\bar r_r  := \arg \min_{r' \in \bar{\mR}_{11}^0} |r-r'|$ for $r \in \{y,w\}$ and $\bar \mR_{11}^0 \in \{\bar \mY_{11}^0,\bar \mW_{11}^0\}$. Note that $\bar r_r = r$ if $r \in \bar \mR_{11}^0$, $\bar r_r = \bar r$ if $r \ge \bar r$ and $\bar r_r = \underline r$ if $r \leq \underline r$, where $\bar r := \sup (\bar \mR_{11}^0)$ and $\underline r := \inf(\bar \mR_{11}^0)$. 
We also use the following notation for partial derivatives: $\partial_x f(x) := \partial f(x) / \partial x$ and $\partial_{xx} f(x) := \partial^2 f(x) / (\partial x \partial x^\top )$. We provide asymptotic theory for the panel data case. Theory for repeated cross-sectional data can be obtained directly from Lemmas 6.1 and B.4 in Chernozhukov et al. (2013) in the univariate case and Corollary 5.1 in Chernozhukov et al. (2025) in the bivariate case.

\subsection{Univariate case}

We modify the notation to account for panel-data structure.
Following the panel data convention we denote by $V_t$ the value of the random variable $V$ at time $t \in \{0,1\}$. We observe a sample $\{(R_{i0}, R_{i1} X_{i0}, X_{i1}, G_{i}), 1 \leq i \leq n\}$ composed of $i.i.d.$ copies of $(R_0, R_1, X_0, X_1, G)$. This allows for arbitrary dependence between the observations of each unit over time. 

Using this notation, let $I^r_t := \textbf{1}(R_t \leq r)$ for $t \in \{0,1\}$. 
For $r \in \bar{\mR}_{11}^0$, the  probability mass function of $I^r_t$  conditional on $G$ and $X_t$ is
\begin{multline*}
	f_{I_t^r \mid G, X_t}(i_t \mid g, t, x_t)  = \Lambda\left( p_{\alpha}(x_t)^\top\alpha_r  + p_{\beta}(x_t)^\top\beta_r g + p_{\gamma}(x_t)^\top\gamma_r t + p_{\theta}(x_t)^\top\theta_r g t \right)^{i_t}  \\ \times \Lambda\left( -p_{\alpha}(x_t)^\top\alpha_r  - p_{\beta}(x_t)^\top\beta_r g - p_{\gamma}(x_t)^\top\gamma_r t - p_{\theta}(x_t)^\top\theta_r g t  \right)^{1-i_t}.     
\end{multline*}
Let the average partial log-likelihood be
\[
	\ell^r(\zeta_r) =  \frac{1}{n} \sum_{i=1}^n 	\ell_i^r(\zeta_r), \quad \ell_i^r(\zeta_r) := \sum_{t=0}^1 \log f_{I_t^r \mid G, X_t}(i_t \mid g, t, x_t),
\]
where $\zeta_r = (\alpha_r, \beta_r, \gamma_r, \theta_r)$.

Similarly at the tails, for $r \notin \bar{\mR}_{11}^0$, 
\begin{equation}\label{eq:plik-univ}
    	\ell^{r}(\xi_{\bar r_r}, \zeta_{\bar r_r}) =  \frac{1}{n} \sum_{i=1}^n 	\ell_i^{r}(\xi_{\bar r_r}, \zeta_{\bar r_r}), \quad \ell_i^{r}(\xi_{\bar r_r}, \zeta_{\bar r_r}) := \sum_{t=0}^1 \log f_{I_t^{r} \mid G, X_t}(i_t \mid g, t, x_t),
\end{equation}
where $\xi_{\bar r_r} := (\alpha_{1 \bar r_r}, \beta_{1 \bar r_r}, \gamma_{1 \bar r_r}, \theta_{1 \bar r_r})$,
\begin{multline*}
	f_{I_t^{r} \mid G, X_t}(i_t \mid g, t, x_t)  = \Lambda\left( (r-\bar r_r)(\alpha_{1\bar r_r} + \beta_{1\bar r_r}g + \gamma_{1\bar r_r}t + \theta_{1\bar r_r} gt) + \mu_{\bar r_r}(g,t,x_t) \right)^{i_t}  \\ \times \Lambda\left(-(r-\bar r_r)(\alpha_{1\bar r_r} + \beta_{1\bar r_r}g + \gamma_{1\bar r_r}t + \theta_{1\bar r_r} gt) - \mu_{\bar r_r}(g,t,x_t)  \right)^{1-i_t},     
\end{multline*}
and $\mu_{\bar r_r}(g,t,x_t) := p_{\alpha}(x_t)^\top\alpha_{\tilde r}  + p_{\beta}(x_t)^\top\beta_{\tilde r} g + p_{\gamma}(x_t)^\top\gamma_{\tilde r} t + p_{\theta}(x_t)^\top\theta_{\tilde r} g t$.

\begin{assumption}[DR Model]\label{ass:dr} (1) The conditional distribution function $F_{R_t \mid G, X_t}(y \mid g, x) = F_{R \mid G, T, X}(r \mid g,t,x)$ takes the form as in \eqref{eq:uni} with  $\Lambda(u) = (1+\exp(-u))^{-1}$, the  standard logistic distribution, and satisfies the tail restrictions. (2) The set $\mathcal{X}_{11}$ is  compact.  (3)  The set  $\mathcal{R}_{11}^0$ is either an  open interval in $\mathbb{R}$ or a finite set in $\mathbb{R}$. In the former case, the conditional density function $f_{R_t \mid G, X_t}(r \mid g, x)$ exists, is uniformly bounded and uniformly continuous on $(r,x)$ in the support of $(Y_t,X_t) \mid G=g$ for $t\in\{0,1\}$ and $ g \in \{0,1\}$. (4) The minimum eigenvalues of $\Ep  \partial_{\zeta \zeta}  \ell_i^r(\zeta^0_r)$ and  $\Ep
     \partial_{\xi \xi}\ell_i^{r_0}(\xi^0_{\bar r_{r_0}}, \zeta^0_{\bar r_{r_0}})$ are bounded away from zero uniformly over $r \in \bar{\mR}_{11}^0$ and some $r_0 \notin \bar{\mR}_{11}^0$, where $\zeta^0_r$ and $\xi^0_{\bar r_r}$ denote the true value of the parameters $\zeta_r$ and $\xi_{\bar r_r}$. 
\end{assumption}

\begin{assumption}[Exchangeable Bootstrap]\label{ass:eb} For each $n$, $(\omega_{n_1}, ..., \omega_{nn})$ is an exchangeable,\footnote{A sequence of random variables $X_1, X_2, ..., X_n$ is exchangeable if for any finite permutation $\sigma$ of the indices $1,2, ..., n$ the joint
distribution of the permuted sequence $X_{\sigma(1)}, X_{\sigma(2)},
...,X_{\sigma(n)} $ is the same as the joint distribution of the original sequence.} nonnegative random vector, which is independent of the data, such that for some $\epsilon> 0$ 
\begin{equation}  \label{eq: assumptions weighted bootstrap}
\begin{split}
		\sup_{n} \Ep[(\omega_{n})^{2+\epsilon}] < \infty, \ \ n^{-1}\sum
		_{i=1}^{n} \left( \omega^d_{ni} - \bar{\omega}_{n} \right)^{2} \to_{\Pr} 1, \ \
		\bar \omega_{n} \to_{\Pr} 1,
\end{split}
\end{equation}
where $\bar \omega_{n} = n^{-1} \sum_{i=1}^{n} \omega^d_{n_i} $. 
\end{assumption}

Let $\hat \zeta_r$ and $\hat \zeta^*_r $ be the estimator and the corresponding bootstrap draw of the parameter $\zeta_r$ for $r \in \mathcal{R}_{11}^0$, and $\hat \xi_{\bar r_r}$ and $\hat \xi^*_{r_r}$ be the estimator and the corresponding bootstrap draw of the parameter $\xi_{\bar r_r}$ for $r \notin \mathcal{R}_{11}^0$. We follow the notation of van der Vaart and Wellner (2013) for weak convergence and bootstrap consistency. Thus,  $\mathbb{Z}_n \rightsquigarrow \mathbb{Z}$ in $\mathbb{E}$ denotes weak convergence of a stochastic process $\mathbb{Z}_n$ to a random element $\mathbb{Z}$ in a normed space $\mathbb{E}$.
Let $D_{n}$ denote the data
vector and $M_{n}$ be the vector of random variables used to generate
bootstrap draws given $D_{n}$. Consider the random element $%
\mathbb{Z}^{*}_{n} = \mathbb{Z}_{n}(D_{n}, M_{n})$. We say that the bootstrap law of $\mathbb{Z}^{*}_{n}$
consistently estimates the law of some tight random element $\mathbb{Z}$ and
write $\mathbb{Z}^{*}_{n} \rightsquigarrow_{\mathrm{P}} \mathbb{Z} $ in $\mathbb{E}$
if: 
\begin{equation*}  \label{boot1}
\begin{array}{r}
\sup_{h \in\text{BL}_{1}(\mathbb{E})} \left| \Ep_{M_{n}} h \left( \mathbb{Z}%
^{*}_{n}\right) - \Ep h(\mathbb{Z})\right| \rightarrow_{\mathrm{P}} 0,%
\end{array}%
\end{equation*}
where $\text{BL}_{1}(\mathbb{E})$ denotes the space of functions with
Lipschitz norm at most 1 and $\Ep_{M_{n}}$ denotes the conditional expectation
with respect to $M_{n}$ given the data $D_{n}$; and $ \rightarrow_{\mathrm{P}}$ denotes
convergence in (outer) probability.


\begin{lemma}[FCLT and Bootstrap CLT for $\hat \zeta_r$ and $\hat \xi_{\bar r_r}$] \label{lemma:asymptotic_univariate} Let $\{W_i:=(R_{0i}, R_{1i},G_i, X_{0i}, X_{1i}) : 1 \leq i \leq n\}$ be a sample of i.i.d. copies of the random vector $W:=(R_0, R_1, G, X_0, X_1)$ that has probability law $\mathrm{P}$ and obeys Assumption \ref{ass:dr}. (i) As $n \rightarrow \infty$ the DR coefficient process possesses the following first order approximation and limit law:
\[
	\sqrt{n} \left( \widehat{\zeta}_r - \zeta_r \right) 
    \rightsquigarrow
	\mathbb{Z}_r^{\zeta} :=  -\Ep \left( \partial_{\zeta \zeta}  \ell_i^r(\zeta_r)\right)^{-1} \mathbb{G}\partial_{\zeta} \ell_i^r(\zeta_r)  \text{ in } \ell^{\infty}(\bar{\mathcal{R}}_{11}^0)^{d_{\zeta}},
\]
and, for  $r_0 \notin \bar{\mR}_{11}^0$,  in $\mathbb{R}^{d_{\xi}}$,
$$
\sqrt{n}(\hat \xi_{\bar r_{r_0}} -  \xi_{\bar r_{r_0}}) \leadsto \mathbb{Z}^{\xi}_{\bar r_{r_0}} := \Ep\left( 
     \partial_{\xi \xi}\ell_i^{r_0}(\xi_{\bar r_{r_0}}, \zeta_{\bar r_{r_0}}) \right)^{-1}\left[\mathbb{G}\partial_{\xi}\ell_i^{r_0}(\xi_{\bar r_{r_0}}, \zeta_{\bar r_{r_0}}) + \Ep\left( \partial_{\xi \zeta}\ell_i^{r_0}(\xi_{\bar r_{r_0}}, \zeta_{\bar r_{r_0}})\right) \mathbb{Z}^{\zeta}_{\bar r_{r_0}} \right],$$
where  $d_{\zeta} := \dim \zeta_r$, $d_{\xi} := \dim \xi_{\bar r_r}$,  $\mathbb{G}$ is a $\mathrm{P}$-Brownian bridge.\footnote{A zero-mean Gaussian process $\mathbb{G}$ is a $\mathrm{P}$-Brownian bridge if its covariance function takes the form $\mathbb{E}\left(\mathbb{G}(f)\mathbb{G}(g)\right) = \int fg d\mathrm{P} - \int fd\mathrm{P} \int gd\mathrm{P}$. See Van der Vaart (1998), page 269.} 
(ii) If, in addition, Assumption \ref{ass:eb} holds, the exchangeable bootstrap law is consistent for the limit law, namely, as $n \rightarrow \infty$, $\sqrt{n} \left( \widehat{\zeta}^*_r - \hat \zeta_r \right) \rightsquigarrow_{\mathrm{P}} \mathbb{Z}_r^{\zeta}$  in  $\ell^{\infty}(\bar{\mathcal{R}}_{11}^0)^{d_{\zeta}}$ and $\sqrt{n}(\hat \xi^*_{\bar r_{r_0}} - \hat \xi_{\bar r_{r_0}}) \leadsto_{\mathrm{P}} \mathbb{Z}^{\xi}_{\bar r_{r_0}}$ in $\mathbb{R}^{d_{\xi}}$, for $r_0 \notin \bar{\mR}_{11}^0$.
\end{lemma}
\begin{proof} The proof is in Appendix \ref{app:asymptotic_univariate}.
\end{proof}

The distributional and the quantile treatment effects are functionals of the parameter $(\zeta_r,\xi_{\bar r_{r_0}})$ and the distribution of $X$. For example, for $d \in\{0,1\}$, $r \in \bar{\mR}_{11}^0$ and $p_{\zeta}(d,x) := (p_{\alpha}(x)^\top, p_{\beta}(x)^\top, p_{\gamma}(x)^\top, p_{\theta}(x)^\top d )^\top$,
\[
    F_{R^d \mid T,G}(r \mid 1,1) = \int_{\mathcal{X}_{11}} \Lambda\left(  p_{\zeta}(d,x)^\top \zeta_r  \right) \mathrm{d} F_{X \mid T,G}(x \mid 1,1),
\]
whereas for $r \not\in \bar{\mR}_{11}^0$ and $p_{\xi}(d) = (1, 1, 1, d )^\top$,
\[
    F_{R^d \mid T,G}(r \mid 1,1) = \int_{\mathcal{X}_{11}} \Lambda\left( (r-\bar r_r) p_{\xi}(d)^\top \xi_{\bar r_{r_0}} + p_{\zeta}(d,x)^\top \zeta_{\bar r_r}  \right) \mathrm{d} F_{X \mid T,G}(x \mid 1,1),
\]
which are both counterfactual distributions. 
The corresponding estimators are constructed using the plug-in rule, that is
\[
    \hat F_{R^d \mid T,G}(r \mid 1,1) = \int_{\mathcal{X}_{11}} \Lambda\left( p_{\zeta}(d,x)^\top  \hat \zeta_r\right) \mathrm{d} \hat F_{X \mid T,G}(x \mid 1,1), \quad r \in \bar{\mR}_{11}^0,
\]
and
\[
    \hat F_{R^d \mid T,G}(r \mid 1,1) = \int_{\mathcal{X}_{11}} \Lambda\left( (r-\bar r_r) p_{\xi}(d)^\top \hat \xi_{\bar r_{r_0}} + p_{\zeta}(d,x)^\top \hat \zeta_{\bar r_r}\right) \mathrm{d} \hat F_{X \mid T,G}(x \mid 1,1), \quad r \notin \bar{\mR}_{11}^0,
\]
where $\hat F_{X \mid T,G}(\cdot \mid 1,1)$ is the empirical distribution of $X$ conditional on $G=1$ and $T=1$. 

The following result follows from the theory of estimated counterfactual distributions in Chernozhukov et al. (2013).

\begin{theorem}[FCLT and Bootstrap CLT for $\hat F_{R^d \mid T,G}$]\label{thm:fclt_cdist} Assume the conditions of Lemma \ref{lemma:asymptotic_univariate} hold. (i) As $n \to \infty$, 
    $$
    \left\{\sqrt{n}\left( \hat F_{R^d \mid T,G}(r \mid 1,1) - F_{R^d \mid T,G}(r \mid 1,1) \right): d \in\{0,1\} \right\} \rightsquigarrow \left\{\mathbb{Z}_r^{F_d} :  d \in\{0,1\}\right\} \text{ in } \ell^{\infty}(\mathcal{R}_{11}^0)^2,
    $$
    where
    \begin{multline*}
        \mathbb{Z}_r^{F_d} = \sqrt{\pi_1} \mathbb{G} \left[ \Lambda\left( p_{\zeta}(d,X_1)^\top  \zeta_r \right) G\right] \\ +\int_{\mathcal{X}_{11}} \lambda\left( p_{\zeta}(d,x)^\top \zeta_r \right)\left[ p_{\zeta}(x,d)^\top \mathbb{Z}_{\bar r_r}^{\zeta} +  (r - \bar r_r) p_{\xi}(d)^\top \mathbb{Z}^{\xi}_{\bar r_{r_0}} \right] \mathrm{d} F_{X \mid T,G}(x \mid 1,1),
    \end{multline*}
   $\pi_1 = \mathrm{P}(G = 1)$, and $\zeta_r = \zeta_{\bar r_r} + (r-\bar r_r)[(\alpha_{1\bar r_r},0\alpha_{-1\bar r_r}), (\beta_{1\bar r_r},0\beta_{-1\bar r_r}), (\gamma_{1\bar r_r},0\gamma_{-1\bar r_r}), (\theta_{1\bar r_r},0\theta_{-1\bar r_r})] $. 
    (ii) As $n \to \infty$, 
    $$
    \sqrt{n}\left( \hat F^*_{R^d \mid T,G}(r \mid 1,1) - \hat F_{R^d \mid T,G}(r \mid 1,1) : d \in\{0,1\} \right) \rightsquigarrow_{\mathrm{P}} (\mathbb{Z}_r^{F_d} :  d \in\{0,1\}) \text{ in } \ell^{\infty}(\mathcal{R}_{11}^0)^2.
    $$
\end{theorem}

Similar results  for distributional and quantile treatment effects can be obtained from Theorem \ref{thm:fclt_cdist} together with the functional delta method; see Chernozhukov et al. (2013).

\subsection{Bivariate case}


We observe a sample $\{W_i := (Y_{i0}, Y_{i1}, Z_{i0}, Z_{i1}, X_{i0}, X_{i1}, G_i), 1 \leq i \leq n\}$ composed of \textit{i.i.d.} copies of the random vector $W := (Y_{0}, Y_{1}, Z_{0}, Z_{1}, X_{0}, X_{1}, G)$. Define $I^y_t := \textbf{1}(Y_t \leq y)$ and $J^z_t = \textbf{1}(Z_t \leq z)$ for $t=\{0,1\}$. For $y \in \bar{\mY}_{11}^0$ and $z \in \bar{\mZ}_{11}^0$, the average partial log-likelihood is
\begin{equation}\label{eq:plik-biv}
    \ell^{yz}(\zeta_{yz}) =  \frac{1}{n} \sum_{i=1}^n 	\ell_i^{yz}(\zeta_{yz}), \quad  \ell_i^{yz}(\zeta_{yz}) := \sum_{t=0}^1  \log f_{I_t^y, J_t^z \mid G, X_t}(I_{it}^y, J_{it}^z \mid G_i, X_{it}),
\end{equation}
where $\zeta_{yz}  := \left(\alpha_{y}, \beta_{y}, \gamma_{y}, \alpha_{z}, \beta_{z}, \gamma_{z}, \alpha_{yz}, \beta_{yz}, \gamma_{yz}\right),$
\begin{equation*}
	\begin{split}
		f_{I_t^y, J_t^z \mid G, X_t}&(i, j \mid g, x) =
		\Phi_{2}(\Xi(\mu_y(g, t, x)), \Xi(\mu_z(g, t, x)); \rho_{yz}(g, t,x))^{i j} \\
		&\times \Phi_{2}(\Xi(\mu_y(g, t, x)), -\Xi(\mu_z(g, t, x)); -\rho_{yz}(g, t,x))^{i(1-j)}  \\
		&\times \Phi_{2}(-\Xi(\mu_y(g, t, x)), \Xi(\mu_z(g, t, x)); -\rho_{yz}(g, t,x))^{(1-i)j} \\
		&\times \Phi_{2}(-\Xi(\mu_y(g, t, x)), -\Xi(\mu_z(g, t, x)); \rho_{yz}(g, t,x))^{(1-i)(1-j)}, 
	\end{split}
\end{equation*}
$\Xi := \Phi^{-1} \circ \Lambda$, $\mu_y(g, t, x) := p_{\alpha}(x_t)^\top\alpha_{y}  + p_{\beta}(x_t)^\top\beta_{y} g + p_{\gamma}(x_t)^\top\gamma_{y} t$, $\mu_z(g, t, x) := q_{\alpha}(x_t)^\top\alpha_{z}  + q_{\beta}(x_t)^\top\beta_{z} g + q_{\gamma}(x_t)^\top\gamma_{z} t$, and $\rho_{yz}(g, t,x):= h(r_{\alpha}(x_t)^\top\alpha_{yz}  + r_{\beta}(x_t)^\top\beta_{yz} g + r_{\gamma}(x_t)^\top\gamma_{yz} t)$.

\begin{assumption}[BDR Model]\label{ass:bdr} (1) The joint distribution of $(Y_t, Z_t)$ conditional on $G, X_t$, $t \in \{0,1\}$ follows the BDR model \eqref{eq:bdr_with_cov} with the tail restrictions. (2) Assumption \ref{ass:dr} holds for $R \in \{Y,Z\}$.  (3) The equation $\Ep[\partial_{\zeta_{yz}} \ell_i^{yz}(\zeta)] = 0$ possesses a unique solution at $\zeta^{0}_{yw}$    and the minimum eigenvalue of the matrix $\Ep[\partial_{\zeta \zeta } \ell_i^{yz}(\zeta_{yz})]$ is bounded away from zero uniformly over $(y,z) \in \overline{\mY\mZ}_{11}^0 := \bar{\mY}_{11}^0 \times\bar{\mZ}_{11}^0$.
\end{assumption}
%
Let $\hat \zeta_{yz}$ and $\hat \zeta^*_{yz} $ be the estimator and the corresponding bootstrap draw of the parameter $\zeta_{yz}$ for $(y,z) \in \overline{\mY\mZ}_{11}^0$, , and $\hat \xi_{\bar r_r}$ and $\hat \xi^*_{r_r}$ be the estimator and the corresponding bootstrap draw of the parameter $\xi_{\bar r_r}$ for $r \notin \mathcal{R}_{11}^0$, with $r \in \{y,z\}$ and $\mathcal{R} \in \{\mY,\mZ\}$.

\begin{lemma}[FCLT and Bootstrap CLT for $\hat \zeta_{yz}$, $\hat \xi_{\bar y_y}$ and $\hat \xi_{\bar z_z}$]\label{lemma:asymptotic_bivariate} Let $\{W_i:=(Y_{0i}, Y_{1i},Z_{0i}, Z_{1i},G_i, X_{0i}, X_{1i}) : 1 \leq i \leq n\}$ be a sample of i.i.d. copies of the random vector $W:=(Y_0, Y_1,Z_0, Z_1, G, X_0, X_1)$ that has probability law $\mathrm{P}$ and obeys Assumption \ref{ass:bdr}. (i) As $n \rightarrow \infty$ the DR coefficient process possesses the following first order approximation and limit law:
\[
	\sqrt{n} \left( \widehat{\zeta}_{yz} - \zeta_{yz} \right) 
    \rightsquigarrow
	\mathbb{Z}_{yz}^{\zeta} :=  -\Ep \left( \partial_{\zeta \zeta}  \ell_i^{yz}(\zeta_{yz})\right)^{-1} \mathbb{G}\partial_{\zeta} \ell_i^{yz}(\zeta_{yz})  \text{ in } \ell^{\infty}(\overline{\mY\mZ}_{11}^0)^{d_{\zeta}},
\]
and, for  $r_0 \notin \bar{\mR}_{11}^0$ with $r \in \{y,z\}$ and $\mR \in \{\mY,\mZ\}$, in  $\mathbb{R}^{d_{\xi}}$,
$$
\sqrt{n}(\hat \xi_{\bar r_{r_0}} -  \xi_{\bar r_{r_0}}) \leadsto \mathbb{Z}^{\xi}_{\bar r_{r_0}} := \Ep\left( 
     \partial_{\xi \xi}\ell_i^{r_0}(\xi_{\bar r_{r_0}}, \zeta_{\bar r_{r_0}}) \right)^{-1}\left[\mathbb{G}\partial_{\xi}\ell_i^{r_0}(\xi_{\bar r_{r_0}}, \zeta_{\bar r_{r_0}}) + \Ep\left( \partial_{\xi \zeta}\ell_i^{r_0}(\xi_{\bar r_{r_0}}, \zeta_{\bar r_{r_0}})\right) \mathbb{Z}^{\zeta}_{\bar r_{r_0}} \right],$$
where  $d_{\zeta} := \dim \zeta_{yz}$, $d_{\xi} := \dim \xi_{\bar r_{r_0}}$,  $\mathbb{G}$ is a $\mathrm{P}$-Brownian bridge.\footnote{A zero-mean Gaussian process $\mathbb{G}$ is a $\mathrm{P}$-Brownian bridge if its covariance function takes the form $\mathbb{E}\left(\mathbb{G}(f)\mathbb{G}(g)\right) = \int fg d\mathrm{P} - \int fd\mathrm{P} \int gd\mathrm{P}$. See Van der Vaart (1998), page 269.} 
(ii) If, in addition, Assumption \ref{ass:eb} holds, the exchangeable bootstrap law is consistent for the limit law, namely, as $n \rightarrow \infty$, $\sqrt{n} \left( \widehat{\zeta}^*_{yz} - \hat \zeta_{yz} \right) \rightsquigarrow_{\mathrm{P}} \mathbb{Z}_{yz}^{\zeta}$  in  $\ell^{\infty}(\overline{\mY\mZ}_{11}^0)^{d_{\zeta}}$ and $\sqrt{n}(\hat \xi^*_{\bar r_r} - \hat \xi_{\bar r_r}) \leadsto_{\mathrm{P}} \mathbb{Z}^{\xi}_{\bar r_r}$ in $\mathbb{R}^{d_{\xi}}$, for $r \notin \bar{\mR}_{11}^0$.
\end{lemma}
\begin{proof} The proof is in Appendix \ref{appendix:asymptotic_bivariate}.
\end{proof}

Kendall coefficient, Spearman coefficient and other quantities of interest are functions of the joint distribution of the potential outcomes $Y^0$ and $Z^0$, which is a functional of the of the parameter $(\zeta_{yz},\xi_{\bar y_{y_0}},\xi_{\bar z_{z_0}})$ and the distribution of $X$. For example, for  $(y,z) \in \overline{\mY\mZ}_{11}^0$, $P(x) := (p(x)^\top, 0,\ldots,0)^\top$, $p(x) := (p_{\alpha}(x)^\top, p_{\beta}(x)^\top, p_{\gamma}(x)^\top)^\top$, $Q(x) := (0,\ldots, q(x)^\top,0,\ldots,0)^\top$, $q(x) := (q_{\alpha}(x)^\top, q_{\beta}(x)^\top, q_{\gamma}(x)^\top)^\top$, $R(x) := (0,\ldots 0,r(x)^\top)^\top$ and $r(x) := (r_{\alpha}(x)^\top, r_{\beta}(x)^\top, r_{\gamma}(x)^\top)^\top$,
\[
    F_{Y^0,Z^0 \mid T,G}(y,z \mid 1,1) = \int_{\mathcal{X}_{11}} \Phi_{2}(\Xi(P(x)^\top \zeta_{yz}), \Xi(Q(x)^\top \zeta_{yz}); \rho_{yz}(R(x)^\top \zeta_{yz})) \mathrm{d} F_{X \mid T,G}(x \mid 1,1),
\]
whereas for  $(y,z) \not\in \overline{\mY\mZ}_{11}^0$,
$$
        F_{Y^0,Z^0 \mid T,G}(y,z \mid 1,1) = \int_{\mathcal{X}_{11}} \Phi_{2}( \Xi(m_{11y}^Y), \Xi(m_{11z}^Z ); \rho_{yz}(m_{11yz}^{YZ})) \mathrm{d} F_{X \mid T,G}(x \mid 1,1),
$$
where $m_{11y}^Y := (y-\bar y_y) p_{\xi}(0)^\top \xi_{\bar y_{y_0}} + P(x)^\top \zeta_{\bar y_y \bar z_z}$, $m_{11z}^Z := (z-\bar z_z) p_{\xi}(0)^\top \xi_{\bar z_{z_0}} + Q(x)^\top \zeta_{\bar y \bar z}$ and $m_{11yz}^{YZ} :=R(x)^\top \zeta_{\bar y_y \bar z_z}$. Both  are counterfactual distributions. 
The corresponding estimators are constructed using the plug-in rule, that is
\[
    \hat F_{Y^0,Z^0 \mid T,G}(y,z \mid 1,1) = \int_{\mathcal{X}_{11}} \Phi_{2}(\Xi(P(x)^\top \hat \zeta_{yz}), \Xi(Q(x)^\top \hat \zeta_{yz}); \rho_{yz}(R(x)^\top \hat \zeta_{yz})) \mathrm{d} \hat F_{X \mid T,G}(x \mid 1,1), 
\]
and
$$
 F_{Y^0,Z^0 \mid T,G}(y,z \mid 1,1) = \int_{\mathcal{X}_{11}} \Phi_{2}( \Xi(\hat m_{11y}^Y), \Xi(\hat m_{11z}^Z ); \rho_{yz}(\hat m_{11yz}^{YZ})) \mathrm{d} F_{X \mid T,G}(x \mid 1,1)
$$
where $\hat F_{X \mid T,G}(\cdot \mid 1,1)$ is the empirical distribution of $X$ conditional on $G=1$ and $T=1$, $\hat m_{11y}^Y := (y-\bar y_y) p_{\xi}(0)^\top \hat \xi_{\bar y_{y_0}} + P(x)^\top \hat \zeta_{\bar y_y \bar z_z}$, $\hat m_{11z}^Z := (z-\bar z_z) p_{\xi}(0)^\top \hat \xi_{\bar z_{z_0}} + Q(x)^\top \hat \zeta_{\bar y \bar z}$ and $m_{11yz}^{YZ} :=R(x)^\top \hat \zeta_{\bar y_y \bar z_z}$. 

The following result follows from the theory of estimated counterfactual distributions in Chernozhukov et al. (2013).

\begin{theorem}[FCLT and Bootstrap CLT for $\hat F_{Y^0,Z^0 \mid T,G}$]\label{thm:fclt_cdist_biv} Assume the conditions of Lemma \ref{lemma:asymptotic_bivariate} hold. (i) As $n \to \infty$, 
    $$
    \sqrt{n}\left( \hat F_{Y^0,Z^0 \mid T,G}(y,z \mid 1,1) - F_{Y^0,Z^0 \mid T,G}(r \mid 1,1)  \right) \rightsquigarrow \mathbb{Z}_{yz}^{F_0} \text{ in } \ell^{\infty}(\mathcal{YZ}_{11}^0),
    $$
    where
    \begin{multline*}
        \mathbb{Z}_{yz}^{F_0} = \sqrt{\pi_1} \mathbb{G} \left[ \Phi_{2}(\Xi(P(X_1)^\top \zeta_{yz}), \Xi(Q(X_1)^\top \zeta_{yz}); \rho_{yz}(R(X_1)^\top \zeta_{yz})) G\right]\\
        +\int_{\mathcal{X}_{11}} \left[ (y - \bar y_y) \Phi^1_2(x,\zeta_{yz})^\top \mathbb{Z}^{\xi}_{\bar y_{y_0}} + (z - \bar z_z) \Phi^2_2(x,\zeta_{yz})^\top
        \mathbb{Z}^{\xi}_{\bar z_{z_0}}\right] \mathrm{d}F_{X \mid T,G}(x \mid 1,1) \\
        +\int_{\mathcal{X}_{11}} \nabla \Phi_2\left(x, \zeta_{yz} \right)^\top  \mathbb{Z}_{\bar y_y \bar z_z}^{\zeta}  \mathrm{d}F_{X \mid T,G}(x \mid 1,1),
    \end{multline*}
   $\pi_1 = \mathrm{P}(G = 1)$, $\Phi^1_2(x,\zeta_{yz}) := \Phi^1_{2}(\Xi(p(x)^\top \zeta_{yz}), \Xi(q(x)^\top \zeta_{yz}); \rho_{yz}(r(x)^\top \zeta_{yz})) p_{\xi}(0)$, $\Phi^2_2(x,\zeta_{yz}) := \Phi^2_{2}(\Xi(p(x)^\top \zeta_{yz}), \Xi(q(x)^\top \zeta_{yz}); \rho_{yz}(r(x)^\top \zeta_{yz})) p_{\xi}(0)$,
   $$
   \nabla \Phi_2(x,\zeta_{yz}) := \left(\begin{array}{ccc}
       \Phi^1_{2}(\Xi(p(x)^\top \zeta_{yz}), \Xi(q(x)^\top \zeta_{yz}); \rho_{yz}(r(x)^\top \zeta_{yz})) p(x) \\
       \Phi^2_{2}(\Xi(p(x)^\top \zeta_{yz}), \Xi(q(x)^\top \zeta_{yz}); \rho_{yz}(r(x)^\top \zeta_{yz})) q(x)\\
       \Phi^3_{2}(\Xi(p(x)^\top \zeta_{yz}), \Xi(q(x)^\top \zeta_{yz}); \rho_{yz}(r(x)^\top \zeta_{yz})) r(x)
   \end{array}\right),    
   $$ 
   and $\Phi_2^j(x_1,x_2,x_3) = \partial_{x_j} \Phi_2(x_1,x_2,x_3),$ $j \in \{1,2,3\}$. 
    (ii) As $n \to \infty$, 
    $$
    \sqrt{n}\left( \hat F^*_{Y^0,Z^0 \mid T,G}(y,z \mid 1,1) - \hat F_{Y^0,Z^0 \mid T,G}(r \mid 1,1)  \right) \rightsquigarrow_{\mathrm{P}} \mathbb{Z}_{yz}^{F_0} \text{ in } \ell^{\infty}(\mathcal{YZ}_{11}^0).
    $$
\end{theorem}


\section{Empirical application}

\label{sec:empirical}

We illustrate our approach through a re-examination of data used in  Card and Krueger (1994), hereafter CK, investigation of the impact on an increase in the minimum wage on the level of employment. In April 1992 New Jersey increased its minimum wage from the Federal level of 4.25 dollars per hour to
5.05 dollars per hour. CK investigated the impact of this increase on the change in the level of  ``full-time equivalent employment", measured as the sum of the number of full-time employees plus half the number of part-time employees, in New Jersey fast food restaurants. They use a DiD estimation strategy in which the control group comprises a group of comparable fast food restaurants in the region of Pennsylvania bordering New Jersey. CK concluded that this particular increase in the minimum wage led to a small increase in the level of full-time equivalent employment. This result produced a large and important related literature on the impact of minimum wages on employment.

We employ our procedure using the CK data to investigate the impact of this increase in the minimum wage on the level of full-time, part-time and full-time equivalent employment respectively.  
We estimate the model using the 409 observations available in CK data set. Of these, 80.9 percent of the observations are from treated establishments in New Jersey. We acknowledge that the data set is relatively small and that this is likely to have implications for the level of statistical significance of the results. However, as our objective is to illustrate our approach in a well-known setting, we prefer to work with a data set which has been frequently used and is well understood (see, for a recent example, Torous et al. 2024) rather than providing our own original application. Note that for the models which include covariates, the additional variables are four dummy variables for franchise type and a dummy variable indicating that the establishment is company owned.

The respective actual and counterfactual distributions are reported in Figure \ref{fig:dist}. These figures are based on the models which include the additional covariates. While there are some differences in the distributions for total employment, indicating an increase in employment for establishment sizes above the 1st quartile, the larger differences appear in Figure \ref{fig:dist}-C. which captures the increases in full-time employment. This figure suggests gains at all quantiles. Figure \ref{fig:dist}-B. is suggestive of some small reductions in part-time employment at some quantiles.

The results of the quantile treatment effects for the univariate analyses are reported in Tables \ref{tab:qte} and \ref{tab:qte_nox} noting that those in the former include the covariates while the latter does not. As the results are generally similar, we focus only on Table 1. The DiD estimate of the mean effect on full-time equivalent employment is 2.65 and this is statistically significant at the 10 percent level. An examination of the table reveals that this mean effect is driven by an increase in full-time employment as there is very weak statistical evidence of a small reduction in mean part-time employment. The level of statistical significance probably reflects the small sample size.

Tables \ref{tab:qte} and \ref{tab:qte_nox} also indicate that the effect of the minimum wage increase is different depending on the size of the establishment. For example, at smaller establishments the effect on full-time equivalent employment is negative and this reflects a reduction in these establishments' levels of part-time employment. However, there appears to be growth in full-time equivalent employment at the larger establishment sizes noting that the level of statistical significance is low.  The point estimates capturing the changes in part-time employment are either zero or negative. The most striking feature of the table is that the increase in full-time equivalent employment is driven by gains in full-time employment. Moreover, the larger gains in employment are at the upper quantiles noting that the estimates with the higher degrees of statistical significance also occur at these quantiles.

As a final exercise, we test the monotonicity of the counterfactual distribution function using the method of Chernozhukov et al. (2010) as presented in their Remark 2. The idea behind their test is that the estimator of the counterfactual distribution $F_{Y^0 \mid G, T}(\cdot \mid 1,1)$ based on rearrangement should have the same asymptotic properties as the one without rearrangement if the model assumptions hold. Therefore, rejecting the null hypothesis of a monotonic $y \mapsto F_{Y^0 \mid G,T}(y \mid 1,1)$ is evidence against the model assumption, in particular against no-interaction. This test can be implementing by verifying if a uniform confidence region for $y \mapsto F_{Y^0 \mid G,T}(y \mid 1,1)$ without rearrangement contains the point estimate with rearrangement. Figure \ref{fig:monotonicity_test} reports this uniform confidence region and shows that the point estimate is within this region. Hence, based on this observation, we cannot reject the null hypothesis that $y \mapsto F_{Y^0 \mid G,T}(y \mid 1,1)$ is monotone, which can be taken as evidence in favor of the no-interaction assumption.

To illustrate the applicability of our methodology to a bivariate analysis we consider the impact of the increase in the minimum wage on the joint distribution of the full-time and part-time employment levels. The counterfactual and actual distributions are shown in Figure \ref{fig:2dim}. The figure reveals that the joint distribution has changed due to the increase in the minimum wage and that the distribution of the treated population appears to have shifted downward and to the right. This appears to be a similar movement to that reported in Figure 5 of Torous et al. (2024). This suggests the treatment has changed the relationship between full-time and part-time employment. However, it is difficult to identify whether this merely reflects the changes in the marginal distributions. It is also difficult to interpret the economic implications of the differences shown in this figure. Accordingly in Table 3 we also present estimates of Kendall's tau and Spearman's correlation index to capture the correlation between these two employment levels. Note that Kendall's tau is calculated as: 
$$
    \hat \tau_d = \frac{2}{n_{11}(n_{11}-1)} \sum_{i=1}^N \sum_{j=i+1}^N G_i G_j T_i T_j\sgn(Y^d_{i} - Y^d_{j}) \sgn(Z^d_{i}-Z^d_{j}), \ \ n_{11} = \sum_{i=1}^N G_i T_i, \ \ d \in \{0,1\},
$$
where $d_i$ indicates whether the Kendall's tau is estimated for the treated or for the untreated sample. Spearman's correlation index is calculated as:

\[
    \hat \rho_{d} = 1 - \frac{6 \sum_{i=1}^{N} G_i T_i R_i^2}{n_{11} (n_{11}^2-1)}, \quad d \in \{0,1\},
\]
where $R_i$ is the difference between the ranks of $Y^d_{i}$ and $Z^d_{i}$ conditional on $G_i=1$ and $T_i=1$ for observation $i$.

The Kendall's tau and the Spearman's correlation index for the treated sample in the second period can be calculated from the observed data. For the counterfactual distribution of the treated sample in the second period when not treated, we first sample from the estimated distribution. That is, we sample a value of $Y^0$ using our estimate of its marginal distribution from above. We then sample $Z^0$ from the conditional distribution of $Z^0 \mid Y^0$ which can be obtained using our estimates for the bivariate model. 



In the absence of treatment the estimates of Kendall's $\tau$ and Spearman's correlation index for these employment levels are -0.0095 and -0.0101 respectively. Following the increase in the minimum wage, this negative relationship becomes stronger, with the corresponding estimate values of -0.1709 and -0.2402. Moreover, despite the relatively small number of observations the difference in the Spearman's correlation is statistically significant at the 10 percent level. These two estimates of the change in the level of correlation both suggest that the increase in the minimum wage has changed the relationship between full-time and part-time employment. The statistically significant stronger negative correlation is consistent with a greater degree of substitutability between part-time and full-time employment in the presence of the higher minimum wage.

\begin{table}
\centering
\begin{turn}{-90}
    \begin{footnotesize}
		\begin{tabular}{lcccccc}
			\hline
			\hline
			& Mean	& 0.1 & 0.25 & 0.5 & 0.75 & 0.9 \\
			\hline
			\hline	
Full-time equivalent employment	&	 2.6554	&	-1.5	&	0.4383	&	2.5	&	1.5	&	1.5\\
95\% confidence intervals 	&	(-0.331,5.642)	&	(-3.528,0.528)	&	(-0.921,1.797)	&	(-0.665,5.665)	&	(-0.94,3.94)	&	(-0.974,3.974)\\
90\% confidence intervals 	&	(0.054,5.257)	&	(-3.48,0.48)	&	(-0.548,1.425)	&	(-0.487,5.487)	&	(-0.539,3.539)	&	(-0.925,3.925)\\
Part-time employment	&	 -0.4618	&	-1.0	&	-2.0	&	-0.5	&	0.0	&	0.0\\
95\% confidence intervals 	&	(-3.596,2.672)	&	(-2.947,0.947)	&	(-4.972,0.972)	&	(-1.963,0.963)	&	(-0.99,0.99)	&	(-0.992,0.992)\\
90\% confidence intervals 	&	(-3.022,2.098)	&	(-2.536,0.536)	&	(-4.946,0.946)	&	(-1.932,0.932)	&	(-0.978,0.978)	&	(-0.982,0.982)\\
Full-time employment	&	 3.0674	&	 $\cdot$	&	2.0	&	1.5	&	3.0	&	5.0\\
95\% confidence intervals 	&	(-0.063,6.197)	&	 $(\cdot, \cdot)$	&	(-0.968,4.968)	&	(-0.956,3.956)	&	(-0.952,6.952)	&	(-0.981,10.981)\\
90\% confidence intervals 	&	(0.49,5.645)	&	 $(\cdot, \cdot)$	&	(-0.923,4.923)	&	(-0.563,3.563)	&	(-0.553,6.553)	&	(-0.084,10.084)\\
\hline \hline
\end{tabular}
\end{footnotesize}
\end{turn}
\caption{Quantile treatment effects}	
\label{tab:qte}
\end{table}

\begin{table}
	\centering
    \begin{turn}{-90}
	\begin{footnotesize}
		\begin{tabular}{lcccccc}
			\hline
			\hline
			& Mean	& 0.1 & 0.25 & 0.5 & 0.75 & 0.9 \\
			\hline
			\hline	
	Full-time equivalent employment	&	 2.565	&	-1.5	&	0.5	&	2.5	&	1.5	&	1.5\\
	95\% confidence intervals 	&	(-0.785,5.915)	&	(-3.536,0.536)	&	(-0.928,1.928)	&	(-0.663,5.663)	&	(-0.726,3.726)	&	(-0.982,3.982)\\
	90\% confidence intervals 	&	(-0.371,5.501)	&	(-3.486,0.486)	&	(-0.557,1.557)	&	(-0.518,5.518)	&	(-0.51,3.51)	&	(-0.938,3.938)\\
	Part-time employment	&	 -0.5134	&	-1.0	&	-2.0	&	1.0	&	0.0	&	0.0\\
	95\% confidence intervals 	&	(-3.926,2.899)	&	(-2.511,0.511)	&	(-4.971,0.971)	&	(-0.977,2.977)	&	(-0.991,0.991)	&	(-0.992,0.992)\\
	90\% confidence intervals 	&	(-3.463,2.436)	&	(-2.447,0.447)	&	(-4.949,0.949)	&	(-0.944,2.944)	&	(-0.976,0.976)	&	(-0.982,0.982)\\
	Full-time employment	&	 3.0512	&	 $\cdot$	&	2.0	&	2.0	&	3.0	&	5.0\\
	95\% confidence intervals 	&	(-0.254,6.357)	&	 $(\cdot, \cdot)$	&	(-0.965,4.965)	&	(-0.929,4.929)	&	(-0.968,6.968)	&	(-0.976,10.976)\\
	90\% confidence intervals 	&	(0.267,5.835)	&	 $(\cdot, \cdot)$	&	(-0.934,4.934)	&	(-0.461,4.461)	&	(-0.951,6.951)	&	(-0.956,10.956)\\
	\hline\hline
\end{tabular}
\end{footnotesize}
\end{turn}
\caption{Quantile treatment effects -- No additional covariates included}	
\label{tab:qte_nox}
\end{table}

\begin{figure}
\centering
\resizebox{1.1\textwidth}{!}{
\input{card_krueger/figure_results_2006_original.tex}}
\caption{Actual and counterfactual distributions for total, parttime, and full-time employment.}
\label{fig:dist}
\end{figure}

\begin{figure}
\centering
\resizebox{1.1\textwidth}{!}{
\input{card_krueger/figure_results_check.tex}}
\caption{Confidence intervals for the estimator without rearrangements and point-estimators using rearrangements.}
\label{fig:monotonicity_test}
\end{figure}

\begin{figure}
	\caption{Joint Distribution of Full-time and Part-time Employment with and without Treatment}
    \label{fig:2dim}
	\input{card_krueger/figure_bivariate_wages_unemployment}
	\input{card_krueger/figure_bivariate_wages_unemployment1}
\end{figure}

\begin{table}
\caption{Estimates of the Kendall's $\tau$ and Spearman's correlation index.}
\centering
\begin{turn}{-90}
\begin{tabular}{lccc}
	\hline
	\hline
	&	\multicolumn{1}{c}{Treatment} & \multicolumn{1}{c}{Without treatment}  & \multicolumn{1}{c}{Difference}\\
	\hline
	\hline
	\multicolumn{4}{l}{Kendall's $\tau$}\\
	\hline
	Part-time and full-time employment 	&-0.1709	&	-0.0095	&	-0.1613\\
	95\% confidence intervals 	&	(-0.2407,-0.1011)	&	(-0.2401,0.221)	&	(-0.3911,0.0684)\\
	95\% confidence intervals 	&	(-0.2305,-0.1113)	&	(-0.1848,0.1657)	&	(-0.339,0.0164)\\
	\hline
	\multicolumn{4}{l}{Spearman's correlation index}\\
	\hline
	Part-time and full-time employment 	&-0.2402	&	-0.0101	&	-0.23\\
	95\% confidence intervals 	&	(-0.3383,-0.142)	&	(-0.2596,0.2393)	&	(-0.5027,0.0325)\\
	90\% confidence intervals 	&	(-0.3216,-0.1588)	&	(-0.1923,0.1721)	&	(-0.4192,-0.0409)\\
	\hline
	\hline
\end{tabular}
\end{turn}
\end{table}

\begin{table}
	\caption{Results of the Kendall's $\tau$ and Spearman's correlation index -- No additional covariates.}
    \centering
    \begin{turn}{-90}
\begin{tabular}{lccc}
	\hline
	\hline
	&	\multicolumn{1}{c}{Treatment} & \multicolumn{1}{c}{Without treatment}  & \multicolumn{1}{c}{Difference}\\
	\hline
	\hline
	\multicolumn{4}{l}{Kendall's $\tau$}\\
	\hline
	Part-time and full-time employment 	&-0.1709	&	-0.0136	&	-0.1573\\
	95\% confidence intervals 	&	(-0.23,-0.1117)	&	(-0.2127,0.1855)&	(-0.371,0.0565)\\
	95\% confidence intervals 	&	(-0.2268,-0.1149)	&	(-0.2035,0.1764)	&	(-0.3374,0.0228)\\
	\hline
	\multicolumn{4}{l}{Spearman's correlation index}\\
	\hline
	Part-time and full-time employment 	&-0.2402	&	-0.0143	&	-0.2259\\
	95\% confidence intervals 	&	(-0.3237,-0.1566)	&	(-0.2352,0.2066)	&	(-0.4751,0.0091)\\
	90\% confidence intervals 	&	(-0.3172,-0.1632)	&	(-0.2148,0.1862)	&	(-0.4316,-0.0202)\\
	\hline
	\hline
\end{tabular}
\end{turn}
\end{table}

\section{Conclusion}

\label{sec:conclusions}
We provide a simple distribution regression based estimator to implement the evaluation of treatment effects in a difference-in-difference setting. As our approach provides counterfactual distributions we are able to explore the impact of the treatment at different quantiles of the distribution of the outcome variable. For both the univariate and multivariate cases we provide the identifying assumption and the associated estimation algorithms. A re-examination of the Card and Krueger (1994) study highlights the utility of various aspects of our approach.

Our analysis can easily be extended to the case of multiple time periods and more than two outcomes. We can also extend our distributional regression framework to use time and unit weights as in the synthetic difference-in-difference estimation method of Arkhangelsky et al. (2021). We leave each of these extensions to future research (e.g. Fern\'andez-Val et al., 2024).

\section*{\thinspace\ References}


\begin{description}

\item \textsc{Almond, D., H.W. Hoynes, and D.W. Schanzenbach} (2011), \textquotedblleft Inside the war on poverty: the impact of food stamps
on birth outcomes\textquotedblright, \emph{Review of Economics and Statistics} \textbf{93}, 387--403.

\item \textsc{Arkhangelsky, D. and G. Imbens} (2024), \textquotedblleft Causal Models for longitudinal and
panel Data: a survey\textquotedblright, \emph{The Econometrics Journal} \textbf{27}, C1-61.

\item \textsc{Arkhangelsky, D. S. Athey, D.A. Hirshberg, G.W. Imbens, and S. Wager} (2021), \textquotedblleft Synthetic difference-in-differences, \emph{American Economic Review} \textbf{111}, 4088–18.

\item \textsc{Athey, S. and G.D. Imbens} (2006),
\textquotedblleft Identification and inference in nonlinear
difference-in-differences models\textquotedblright, \emph{Econometrica} 
\textbf{74}, 431--97.

\item \textsc{Biewen, M., M. R\"ummele, and B. Fitzenberger} (2022), \textquotedblleft Using distribution regression Difference-in-Differences to evaluate the effects of a minimum wage introduction on the distribution of hourly wages and hours worked, working paper, N\"urnberg.

\item \textsc{Blundell, R., C. Meghir, M. Costa Dias and J. Van Reenen} (2004), \textquotedblleft Evaluating the employment impact of a mandatory job search program\textquotedblright, \emph{Journal of the European Economic Association} \textbf{2}, 569--606.

\item \textsc{Callaway,  B. and T. Li} (2019), \textquotedblleft Quantile treatment effects in difference in differences models with panel data\textquotedblright, \emph{Quantitative Economics} \textbf{10}, 1579--1618.

\item \textsc{Card, D.} (1990),  \textquotedblleft The Impact of the Mariel Boatlift on the Miami Labor Market\textquotedblright, \emph{Industrial and Labor Relations Review}, \textbf{43}, 245–-57.

\item \textsc{Card, D. and A.B. Krueger} (1994),  \textquotedblleft Minimum Wages and Employment: A Case Study of the Fast-Food Industry in New Jersey and Pennsylvania\textquotedblright \emph{American Economic Review} \textbf{84}, 772–-93.

\item \textsc{Cengiz, D., A. Dube, A. Lindner, and B. Zipperer} (2019), \textquotedblleft The effect of minimum wages on low-wage jobs \textquotedblright, \emph{Quarterly Journal of Economics} \textbf{134}, 1405--54.

\item \textsc{Chernozhukov, V. I. Fern\'andez-Val and A. Galichon} (2010), \textquotedblleft Quantile and probability curves without crossing \textquotedblright, \emph{Econometrica} \textbf{78}, 1093--1125.

\item \textsc{Chernozhukov, V., I. Fern\'{a}ndez-Val, and Melly B.}
(2013), \textquotedblleft Inference on counterfactual
distributions\textquotedblright , \emph{Econometrica}, \textbf{81}, 2205--68.

\item \textsc{Chernozhukov, V., I. Fern\'{a}ndez-Val, and Melly B.}
(2022), \textquotedblleft Fast algorithms for the quantile regression process\textquotedblright , \emph{Empirical economics}, \textbf{62(1)}, 7--33.

\item \textsc{Chernozhukov, V., I. Fern\'{a}ndez-Val, and S. Luo}{\small \ }%
(2019), \textquotedblleft Distribution regression with sample selection,
with an application to wage decompositions in the UK", working paper, MIT,
Cambridge (MA).

\item \textsc{Chernozhukov, V. I. Fern\'andez-Val, B. Melly, and K. Wüthrich} (2020), \textquotedblleft Generic inference on quantile and quantile effect functions for discrete outcomes\textquotedblright, \emph{Journal of the American Statistical Association} \textbf{115}, 123--37.

\item \textsc{Chernozhukov, V., Fern\'{a}ndez-Val I., J. Meier, A. van Vuuren and F. Vella}
(2024) \textquotedblleft Conditional Rank-Rank Regression \textquotedblright, arXiv preprint arXiv:2407.06387.

\item \textsc{Chernozhukov, V., Fern\'{a}ndez-Val I., J. Meier, A. van Vuuren and F. Vella}
(2025) \textquotedblleft Bivariate distribution regression with an application to
intergenerational mobility\textquotedblright, arXiv preprint	arXiv:2508.12716.

\item \textsc{Dube, A.} (2019), \textquotedblleft Minimum wages and the distribution of family incomes\textquotedblright, \emph{American Economic Journal: Applied Economics} \textbf{11}, 268–-304.


\item \textsc{Fern\'{a}ndez-Val I., J. Meier, A. van Vuuren and F. Vella}
(2024) \textquotedblleft Distributional synthetic difference-in-differences\textquotedblright, working paper, Boston University.

\item \textsc{Foresi, S. and F. Peracchi} (1995), ``The conditional
distribution of excess returns: an empirical analysis'', \emph{Journal of
the American Statistical Association}, \textbf{90}, 451--66.

\item \textsc{Goodman-Bacon, A.} (2021), \textquotedblleft The long-run effects of childhood insurance coverage: medicaid implementation, adult health, and labor market outcomes, \emph{American Economic Review} \textbf{111}, 2550-93.

\item \textsc{Goodman-Bacon, A. and L. Schmidt} (2020), \textquotedblleft Federalizing benefits: The introduction of supplemental security income and the size of the safety net.\textquotedblright, \emph{Journal of Public Economics} \textbf{185}, 104174.

\item \textsc{Kim, D. and J.M. Wooldridge} (2024), \textquotedblleft Difference-in-differences estimator of quantile
treatment effect on the treated \textquotedblright, \emph{Journal of Business and Economic Statistics} \textbf{43}, 401--12.

\item \textsc{MaCurdy, T.} (2015), \textquotedblleft How effective is the minimum wage at supporting the poor?\textquotedblright, \emph{Journal of Political Economy} \textbf{123}, 497--545.

\item \textsc{Malesky, E.J., C.V. Nguyen, and A. Trahn} (2014),
\textquotedblleft The Impact of recentralization on public services:
A difference-in-differences analysis of the abolition
of elected councils in Vietnam\textquotedblright, \emph{American Political Science Review} 
\textbf{108}, 144--68.

\item \textsc{Melly, B. and Santangelo} (2015), \textquotedblleft The changes-in-changes model with covariates\textquotedblright, working paper, Bern University.

\item \textsc{Roth, J. and P.H.C. Sant'Anna} (2023),
\textquotedblleft When Is parallel trends sensitive to functional form?\textquotedblright, \emph{Econometrica} 
\textbf{91}, 737--47.

\item \textsc{Snow, J.} (1855), \textquotedblleft \emph{On the Mode of Communication of Cholera.}\textquotedblright 2nd ed. John Churchill.

\item \textsc{Torous, W., F. Gunsilius, and P. Rigollet} (2024), \textquotedblleft An optimal transport approach to estimating
causal effects via nonlinear difference-in-differences, working paper\textquotedblright, University of California, Berkeley.

\item \textsc{Van der Vaart, A.W.} (1998), \textquotedblleft \emph{Asymptotic statistics} \textquotedblright, Cambridge University Press.

\item \textsc{Van der Vaart, A.W. and J.A. Wellner} (2013), \textquotedblleft  Weak Convergence and empirical processes with applications to statistics \textquotedblright, Springer, Cham.

\item \textsc{Wooldridge, J.M.} (2023),
\textquotedblleft Simple approaches to nonlinear difference-in-differences with panel data\textquotedblright, \emph{Econometrics Journal} 
\textbf{26}, C31--66.

\item \textsc{Williams, O. D. and J.E. Grizzle} (1972), \textquotedblleft Analysis of contingency tables having ordered response categories\textquotedblright, \emph{Journal of the American Statistical Association} \textbf{67}, 55–63.

\end{description}

\appendix
\renewcommand{\baselinestretch}{1}

\section{Proofs}

\subsection{Proof of Lemma \ref{lemma:asymptotic_univariate}}
\label{app:asymptotic_univariate}
The result follows from the same steps as in Lemma 6.1 of Chernozhukov et al. (2024) after replacing the expression of the conditional log-likelihood function in the cross-sectional case by the expression of the average partial log-likelihood in the panel case that aggregates over the  two time periods in \eqref{eq:plik-univ}.

\subsection{Proof of Lemma \ref{lemma:asymptotic_bivariate}}\label{appendix:asymptotic_bivariate}
The result follows from the same steps as in Lemma 5.1 of Chernozhukov et al. (2025) after replacing the expression of the conditional log-likelihood function in equation (4.2) of Chernozhukov et al. (2025) by the expression of the average partial log-likelihood that aggregates over the  two time periods in \eqref{eq:plik-biv}.

\end{document}